\documentclass[twocolumn,superscriptaddress,aps]{revtex4-2}

\usepackage{graphicx}
\usepackage{amsmath}
\usepackage{amsthm}
\usepackage{amssymb}
\usepackage{latexsym}
\usepackage{array}
\usepackage{hyperref}
\usepackage{amsfonts}
\usepackage{dsfont}
\usepackage{mathrsfs}
\usepackage{verbatim}
\usepackage{bbold}
\usepackage[normalem]{ulem}
\usepackage{upgreek}
\usepackage{makecell}
\usepackage{adjustbox}
\usepackage{algorithm}
\usepackage{algpseudocode}
\usepackage{color}
\usepackage{bm}
\usepackage{times}
\usepackage{booktabs}
\usepackage{multirow}

\newcommand{\abs}[1]{\left|#1\right|}
\newcommand{\norm}[1]{\left\|#1\right\|}
\newcommand{\ket}[1]{\left|#1\right\rangle}
\newcommand{\bra}[1]{\left\langle #1\right|}
\newcommand{\braket}[2]{\left\langle #1|#2\right\rangle}
\newcommand{\ketbra}[2]{\ket{#1}\!\!\bra{#2}}
\newcommand{\Tr}{\operatorname{Tr}}
\newcommand{\End}{\operatorname{End}}
\newcommand{\id}{\hat{\mathds{1}}}
\newcommand{\R}{\mathbb{R}}
\newcommand{\C}{\mathbb{C}}

\newcommand{\cA}{\mathcal{A}}
\newcommand{\cD}{\mathcal{D}}

\newcommand{\SM}{Supplemental Material}
\newcommand{\bal}{\boxtimes}
\newcommand{\phys}{\mathrm{phys}}

\newcommand{\Sym}{\operatorname{Sym}}
\newcommand{\Span}{\operatorname{span}}

\newcommand{\Proj}{\mathbb P}

\theoremstyle{plain}
\newtheorem{theorem}{Theorem}
\newtheorem{proposition}{Proposition}

\newtheorem{corollary}{Corollary}
\theoremstyle{definition}
\newtheorem{definition}{Definition}
\newtheorem{remark}{Remark}

\makeatletter
\renewcommand\part[1]{%
  \clearpage
  \onecolumngrid
  \section*{#1}
  
}
\makeatother
\makeatletter
\let\origaddcontentsline\addcontentsline
\renewcommand{\addcontentsline}[3]{}
\makeatother
\begin{document}

\part{}

\title{Phase-Flag Access and No-Go Constraints on Quotient-Space Real Quantum Mechanics}

\author{Jeongho~Bang}\email{jbang@yonsei.ac.kr}
\affiliation{Institute for Convergence Research and Education in Advanced Technology, Yonsei University, Seoul 03722, Republic of Korea}
\affiliation{Department of Quantum Information, Yonsei University, Incheon 21983, Republic of Korea}

\author{Kyoungho~Cho}
\affiliation{Institute for Convergence Research and Education in Advanced Technology, Yonsei University, Seoul 03722, Republic of Korea}
\affiliation{Department of Statistics and Data Science, Yonsei University, Seoul 03722, Republic of Korea}

\author{Kyunghyun~Baek}
\affiliation{Institute for Convergence Research and Education in Advanced Technology, Yonsei University, Seoul 03722, Republic of Korea}
\affiliation{Department of Quantum Information, Yonsei University, Incheon 21983, Republic of Korea}

\date{\today}

\begin{abstract}
Barrios Hita \textit{et al.} [Phys. Rev. Lett. \textbf{136}, 240202 (2026)] proposed a quotient-space formulation of quantum mechanics over the real numbers and concluded that complex numbers are only a convenience. A preceding analysis [arXiv:2607.05865] showed that, when the construction is empirically equivalent to complex quantum mechanics, its real flag is protected by a hidden complex structure $\hat{J}$ and its composition rule is the balanced tensor product over that structure. Here we ask what would follow from the stronger operational reading that the flag is an accessible real degree of freedom. The answer is a direct clash with overlap-determinability (OD), introduced in [arXiv:2601.14638], and with standard no-go theorems. A readable or coherently controllable flag fixes a representative of each ray, and hence supplies precisely the phase convention that OD identifies as the missing resource for universal superposition of unknown states. Once promoted to a generic phase-access primitive, this resource enables probabilistic cloning of a linearly dependent set, steering-based signaling, and logarithmic-query unstructured search. Thus the quotient construction has a sharp follow-up interpretation: as a protected gauge representation it is standard complex quantum mechanics in real notation; as an accessible real theory it leaves the quantum operational framework.
\end{abstract}

\maketitle

%----------------------------------------------------------------------------------------------------------------------------------------------------------------------------------------------------------------
%\section{Introduction}
%----------------------------------------------------------------------------------------------------------------------------------------------------------------------------------------------------------------

{\em Introduction.}---There is a harmless and a non-harmless sense in which complex numbers can be removed from quantum mechanics. The harmless sense is representational: a vector $x+i y\in\C^d$ can be stored as the real pair $(x,y)\in\R^{2d}$. The non-harmless sense is operational: one asks whether the phase gauge, the observable algebra, and the tensor product over $\C$ can be replaced by a genuinely real operational theory. The latter question is what is tested by real-vs-complex Bell-network no-go theorems and experiments, where the foil theory uses real amplitudes and the ordinary tensor product over $\R$~\cite{Renou2021,Li2022,Chen2022,Wu2022,Ying2025}.

Barrios Hita \textit{et al.}~\cite{BarriosHita2026} proposed a different real-number formulation. Each complex state is embedded into a real vector equipped with a two-dimensional flag, and composite systems are obtained by quotienting out different distributions of the complex phase among subsystems. Because this quotient construction is designed to be one-to-one with complex quantum mechanics, it reproduces the complex predictions and evades the usual real-amplitude foil. The authors therefore argue that real-valued quantum mechanics cannot be experimentally falsified and that complex numbers are a matter of convenience.

Our previous work~\cite{BangChoBaek2026Hidden} addressed the structural part of this claim. It showed that empirical equivalence of the quotient construction requires a distinguished real linear complex structure $\hat{J}$, $\hat{J}^2=-\id$, and that physical effects, instruments, and dynamics must preserve the corresponding $SO(2)$ phase gauge. It also showed that the quotient composition rule is the balanced tensor product over this hidden $\hat{J}$, equivalently the complex tensor product written in real form. Hence the protected quotient theory is not the ordinary real-amplitude theory; it is standard complex quantum mechanics (CQM) expressed in real coordinates.

In this study, as a follow-up, we examine the remaining operational question: what if the phase flag is not merely a protected gauge coordinate but is treated as an accessible real degree of freedom? This question is forced by the rhetoric of a real-valued theory. If the flag is in the real state space but no operation can read or couple to it, then its inaccessibility is a superselection rule. If some operation can read or coherently control it, then global phase has become physical. We show that the latter reading is not merely ``different from CQM''; it collides with the overlap-determinability (OD) obstruction to universal superposition of unknown states~\cite{BangChoYee2026OD} and with familiar no-go theorems for cloning, no-signaling, and Grover search~\cite{Oszmaniec2016,Bandyopadhyay2020,DuanGuo1998,HJW1993,Bao2016,Grover1997,Zalka1999,BBBV1997,KumarParaoanu2011,Yee2020}.

The conceptual point is simple. In projective quantum mechanics an unknown pure input is a ray, not a vector with an absolute phase. In the theory of OD~\cite{BangChoYee2026OD}, it is true that a coherent expression such as $a\ket{\psi}+b\ket{\phi}$ is not a function of two unknown rays unless the physical scenario supplies a phase convention. The quotient flag stores exactly such a convention. Therefore, an accessible flag is an OD resource. The protected branch is safe precisely because it denies that resource.

%----------------------------------------------------------------------------------------------------------------------------------------------------------------------------------------------------------------
%\section{Structural input from Ref.~\cite{BangChoBaek2026Hidden}}
%----------------------------------------------------------------------------------------------------------------------------------------------------------------------------------------------------------------

{\em Structural input from Ref.~\cite{BangChoBaek2026Hidden}.}---Let $H\simeq\C^d$ and let $V=H_{\R}$ be its realification. In a fixed basis,
\begin{eqnarray}
\hat{S}\ket{\psi}=\operatorname{Re}\ket{\psi}\otimes\ket{0}_{F}+\operatorname{Im}\ket{\psi}\otimes\ket{1}_{F},
\label{eq:S_main}
\end{eqnarray}
and multiplication by $i$ is represented by a real skew-symmetric operator
\begin{eqnarray}
\hat{J}^2=-\id,
\quad
\hat{J}^T=-\hat{J},
\quad
\hat{S}(e^{i\alpha}\psi)=e^{\alpha \hat{J}}\hat{S}(\psi).
\label{eq:J_orbit}
\end{eqnarray}
Thus the complex global phase becomes an $SO(2)$ orbit in the flag plane.

The protected quotient construction identifies all points on this orbit. Consequently, any real effect $\hat{E}=\hat{E}^T$ that is physically meaningful on the quotient must obey
\begin{eqnarray}
v^T \hat{E} v=(e^{\alpha \hat{J}}v)^T \hat{E}(e^{\alpha \hat{J}}v)
\quad (\forall v,\alpha),
\label{eq:orbit_invariance}
\end{eqnarray}
which is equivalent to~\cite{BangChoBaek2026Hidden}
\begin{eqnarray}
[\hat{E},\hat{J}]=0.
\label{eq:commute_J}
\end{eqnarray}
The physical real observable algebra is therefore
\begin{eqnarray}
\cA_{\phys}=\Sym(V)\cap\{\hat{J}\}',
\label{eq:Aphys}
\end{eqnarray}
not the full cone of positive real effects. The same descent condition applies to instruments and dynamics. For example, a sufficient branchwise condition is
\begin{eqnarray}
\hat{\mathcal I}_r(e^{\alpha \hat{J}}\hat{\rho} e^{-\alpha \hat{J}})
=e^{\alpha \hat{J}}\hat{\mathcal I}_r(\hat{\rho})e^{-\alpha \hat{J}}
\quad (\forall r,\alpha).
\label{eq:inst_cov}
\end{eqnarray}
If this condition is dropped, outcome probabilities or posterior states depend on the chosen representative of a complex ray.

For composites, the quotient relation is the balanced tensor product over the hidden complex structures:
\begin{eqnarray}
V_A\bal V_B&=&(V_A\otimes_{\R}V_B)/N_{AB},\nonumber\\
N_{AB}&=&\Span_{\R}\{\hat{J}_A v\otimes w-v\otimes \hat{J}_B w\}.
\label{eq:balanced}
\end{eqnarray}
so that $\hat{J}_A v\bal w=v\bal \hat{J}_B w$. As shown in Ref.~\cite{BangChoBaek2026Hidden},
\begin{eqnarray}
V_A\bal V_B\simeq (H_A\otimes_{\C}H_B)_{\R}.
\label{eq:balanced_iso}
\end{eqnarray}
Eqs.~(\ref{eq:Aphys})--(\ref{eq:balanced_iso}) are the boundary conditions for the present paper. They define the branch in which the quotient theory is empirically equivalent to CQM. We now ask what happens outside that branch.

%----------------------------------------------------------------------------------------------------------------------------------------------------------------------------------------------------------------
%\section{Phase flags as OD resources}
%----------------------------------------------------------------------------------------------------------------------------------------------------------------------------------------------------------------

{\em Phase flags as OD resources.}---Let $\Proj(H)$ denote the pure-state ray space. A phase convention on a domain $\cD\subset\Proj(H)$ is a map $\Gamma$ assigning to each ray $\hat{P}\in\cD$ a normalized representative $\ket{\psi_\Gamma}$ with $\hat{P}=\ketbra{\psi_\Gamma}{\psi_\Gamma}$. The domain is overlap-determinable in a given operational scenario if that scenario supplies such a convention, so that $\braket{\psi_\Gamma}{\phi_\Gamma}$ is a definite complex datum for the relevant inputs. The OD theorem~\cite{BangChoYee2026OD} states, in particular, that a nonzero completely positive trace-nonincreasing map that produces a prescribed coherent superposition of unknown inputs on a domain exists if and only if the domain is OD in the corresponding scenario. The missing resource is not success probability; it is phase convention.

The same obstruction can be stated without invoking any dynamics. If one treats the real representatives themselves as ordinary vectors, the most tempting operation is simply to add them. But addition is not a function on the quotient unless a representative has first been chosen. This is the local form of the OD obstruction.

\begin{proposition}[Naive real superposition is representative-dependent]
\label{prop:naive_superposition}
Let the quotient identify $v\sim e^{\alpha \hat{J}}v$. For nonzero coefficients $a,b$, the rule
\begin{eqnarray}
([v],[w])\longmapsto [a v+b w]
\label{eq:naive_real_add}
\end{eqnarray}
is not a well-defined operation on pairs of generic complex rays. It becomes well defined only after an additional phase convention chooses representatives of the two orbits.
\end{proposition}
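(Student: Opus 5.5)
We prove the negative claim by exhibiting representatives of the same pair of orbits whose naive sums land in different orbits. Then we show that a phase convention removes exactly this ambiguity. Throughout we work in $V=H_{\R}$, where $\hat{S}$ intertwines multiplication by $e^{i\alpha}$ with $e^{\alpha\hat{J}}$ as in Eq.~(\ref{eq:J_orbit}). Any statement about real vectors therefore translates into one about complex vectors. We write $\psi=\hat{S}^{-1}v$ and $\phi=\hat{S}^{-1}w$. Since $\hat{S}$ is $\R$-linear, $\hat{S}^{-1}(av+bw)=a\psi+b\phi$ for the real coefficients $a,b$. Complex coefficients $a+ib$ can be handled identically by using $a\id+b\hat{J}$ in place of $a$.

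The first step is well-definedness. If the rule in Eq.~(\ref{eq:naive_real_add}) were a function on orbit pairs, replacing $w$ by $e^{\theta\hat{J}}w$ would leave the class unchanged: $[av+be^{\theta\hat{J}}w]=[av+bw]$ for all $\theta$. Replacing $v$ as well only reduces to this case after factoring out a common rotation. Translated into $H$, this requires that for every $\theta$ there be some $\beta(\theta)$ with
\begin{eqnarray}
a\psi+be^{i\theta}\phi=e^{i\beta(\theta)}(a\psi+b\phi).
\end{eqnarray}
Here we compare projective classes; if normalization is imposed afterward, we instead compare rays, which allows an extra positive scalar.

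The second step is the genericity argument. For generic rays, take $\psi$ and $\phi$ to be $\C$-linearly independent, which is possible whenever $d\ge 2$. Comparing coefficients in the displayed identity gives $a=e^{i\beta}a$ and $be^{i\theta}=e^{i\beta}b$. Because $a\neq 0$, the first gives $e^{i\beta}=1$, and then $b\neq0$ forces $e^{i\theta}=1$. Thus the identity fails for every $\theta\notin 2\pi\mathbb Z$. With a normalization scalar $\lambda>0$, the same coefficient comparison gives $\lambda e^{i\beta}=1$, so again $e^{i\theta}=1$. As an additional check we will verify the failure observably. The overlap $|\langle\chi|a\psi+be^{i\theta}\phi\rangle|^2$ with, say, $\chi=\psi+\phi$, depends on $\theta$ through the interference term $2\operatorname{Re}(\bar a b e^{i\theta}\langle\psi|\phi\rangle')$, suitably expanded. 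This means even the effects in $\cA_{\phys}$ distinguish the resulting classes, so the ambiguity is physical and not merely notational. When the two rays coincide, there is a degenerate case where the sum can vanish; it is excluded by genericity.

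The third step is the positive statement. A convention $\Gamma$ on $\cD$ picks $v_\Gamma([v])$ within each orbit. The map $([v],[w])\mapsto[av_\Gamma+bw_\Gamma]$ is then trivially a function on $\cD\times\cD$. Conversely, a well-defined superposition rule determines the relative phase $\langle\psi_\Gamma|\phi_\Gamma\rangle$ modulo the residual common rotation. This ties the statement to the OD notion stated earlier.

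The main obstacle is being careful about what the quotient means. The identification $v\sim e^{\alpha\hat{J}}v$ only allows a common rotation of the output, while independent rotations of $v$ and $w$ are what generate the ambiguity. The coefficient-comparison step uses $\C$-linear independence, which holds for generic rays (a dense open set of pairs). That is why the proposition is stated for generic rays.
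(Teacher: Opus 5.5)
Your proof is correct and uses the same mechanism as the paper: rotate one input representative along its $\hat{J}$-orbit and show that the naive sum leaves its ray. It is, however, a genuinely more general argument. The paper exhibits a single instance, $v=\hat{S}\ket0$, $w=\hat{S}\ket1$ with the rotation $e^{(\pi/2)\hat{J}}=\hat{J}$, where $a\ket0+b\ket1$ and $a\ket0+ib\ket1$ are visibly different rays. That suffices to refute well-definedness, but it directly handles only an orthogonal pair.

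Your coefficient comparison uses the fact that distinct rays have $\C$-linearly independent representatives. It shows failure for every pair of distinct rays and every $\theta\notin 2\pi\mathbb Z$. This is what justifies the word ``generic'', and it isolates coinciding rays as the only exception.

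Your converse, that the output ray fixes the relative phase of the chosen representatives, gives content to the ``only after a phase convention'' clause, which the paper states only in words. Note that it yields a relative phase for the given pair. Turning such pairwise relative phases into a single section $\Gamma$ on a larger domain $\cD$ would need a further consistency argument.

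Your ``observability check'' is imprecise and can be dropped. With $\chi=\psi+\phi$, the cross term is actually $2\operatorname{Re}\bigl(\bar a b e^{i\theta}(1+\braket{\psi}{\phi})^2\bigr)$, and the output norm also depends on $\theta$, so you would have to normalize. More to the point, once the two outputs are shown to be distinct rays, the realification of the projector onto one of them is an element of $\cA_{\phys}$ that separates them automatically.
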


\begin{proof}[Proof sketch]
Take two real basis states represented by $v=\hat{S}\ket0$ and $w=\hat{S}\ket1$. The representatives $w$ and $\hat{J}w=\hat{S}(i\ket1)$ belong to the same complex ray. However, $a v+b w$ represents $a\ket0+b\ket1$, whereas $a v+b \hat{J}w$ represents $a\ket0+i b\ket1$. For $a b\neq0$ these are different rays. Thus the output depends on a gauge choice for the second input. A phase convention is precisely the extra datum that removes this ambiguity.
\end{proof}

The quotient flag is dangerous exactly because it carries this missing resource. If a $\hat{J}$-noncommuting operation distinguishes $v$ from $e^{\alpha \hat{J}}v$, it distinguishes representatives of a single complex ray. If, in addition, the flag can be coherently reset or controlled on arbitrary unknown inputs, it selects a representative. For real representatives $v,w\in V$, the complex overlap in the induced convention is recovered from
\begin{eqnarray}
\braket{\psi}{\phi}=v^T w-i v^T\hat{J}w.
\label{eq:overlap_from_real}
\end{eqnarray}
Thus, a coherently accessible flag converts projective input data into vector-representative data.

We use the following terminology to isolate the non-quantum assumption.

\begin{definition}[Strong phase access]
\label{def:strong_access}
A quotient-space real theory has strong phase access on a domain $\cD$ if the flag angle can be read, coherently aligned, or coherently controlled for arbitrary unknown inputs from $\cD$ in a way that permits convention-fixed interference operations. Equivalently, the theory supplies an operational phase convention $\Gamma_F$ on $\cD$.
\end{definition}

The adjective ``strong'' is essential. A single noisy $\hat{J}$-noncommuting imperfection does not automatically implement the protocols below. The point is conditional: if the flag is promoted from gauge coordinate to a generic phase reference, then the quotient construction supplies precisely the resource that OD says is absent in ordinary quantum mechanics.

\begin{theorem}[Flag access implies OD access]
\label{thm:flag_OD}
On any domain of unknown rays, strong phase access induces overlap-determinability. Consequently, a universal convention-fixed superposition primitive derived from the flag is not an operation of standard quantum mechanics unless the same domain was already OD by ordinary side information.
\end{theorem}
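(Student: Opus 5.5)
The plan is to unwind the definitions and let the OD theorem of Ref.~\cite{BangChoYee2026OD} do the heavy lifting. By Definition~\ref{def:strong_access}, strong phase access on $\cD$ is equivalent to an operational phase convention $\Gamma_F$ on $\cD$. Concretely, for each ray $\hat{P}\in\cD$ the flag fixes a real representative $v_{\hat{P}}\in V$ on the $SO(2)$ orbit $\{e^{\alpha\hat{J}}v\}$ of Eq.~(\ref{eq:J_orbit}). Pulling this back through $\hat{S}$ gives a normalized complex representative $\ket{\psi_{\Gamma_F}}=\hat{S}^{-1}v_{\hat{P}}$ with $\hat{P}=\ketbra{\psi_{\Gamma_F}}{\psi_{\Gamma_F}}$. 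The first step is to check that this is a genuine phase convention in the sense of the definition preceding Proposition~\ref{prop:naive_superposition}. The only point to verify is that distinct points on one orbit map to distinct complex phases, so that selecting a point on the orbit is the same as selecting a representative. This follows from the equivariance $\hat{S}(e^{i\alpha}\psi)=e^{\alpha\hat{J}}\hat{S}(\psi)$.

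The second step shows that the overlaps are then definite data. For $\hat{P},\hat{Q}\in\cD$ with flag-fixed representatives $v,w$, Eq.~(\ref{eq:overlap_from_real}) gives $\braket{\psi_{\Gamma_F}}{\phi_{\Gamma_F}}=v^Tw-i\,v^T\hat{J}w$. This is a function of the pair of rays alone, because the representatives are fixed by $\Gamma_F$ rather than by an arbitrary choice. That is exactly the statement that $\cD$ is OD in the scenario equipped with the flag, which proves the first sentence of the theorem. Proposition~\ref{prop:naive_superposition} serves as a consistency check: the ambiguity exhibited there, $a\ket0+b\ket1$ versus $a\ket0+ib\ket1$, is removed once $w$ is replaced by $\Gamma_F$'s choice. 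The real map $(v,w)\mapsto av+bw$ then descends to a well-defined map on pairs of rays in $\cD$, and this is the convention-fixed superposition primitive.

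For the second sentence I would argue by contraposition, using the ``only if'' direction of the OD theorem. Suppose the universal convention-fixed superposition primitive on $\cD$ were implemented by a nonzero completely positive trace-nonincreasing map of standard quantum mechanics. Such a map acts only on the projective inputs together with whatever ordinary side information the scenario provides, so the $\hat{J}$-noncommuting flag is not available to it. The OD theorem then says $\cD$ must already be OD in that ordinary scenario. Hence, unless ordinary side information already supplied a convention, no standard-quantum map realizes the primitive.

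The main obstacle is making the second sentence rigorous. It requires a clean separation between the convention supplied by the flag and a convention supplied by ordinary side information. One must show that a standard-quantum implementation cannot secretly use the flag. My approach would be to invoke the descent conditions (\ref{eq:commute_J})--(\ref{eq:inst_cov}): every standard operation commutes with $e^{\alpha\hat{J}}$, so its action is invariant under re-choosing representatives. Any phase convention such an operation relies on must therefore be external to the ray data, which is precisely the ordinary side-information clause. Spelling out that the OD theorem's scenario-dependence covers all such external conventions is the delicate part; I would treat it by identifying ``scenario'' with the set of $\hat{J}$-covariant operations plus classical and quantum side registers.
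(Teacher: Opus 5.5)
Your proposal is correct and follows essentially the same route as the paper: you read the phase convention $\Gamma_F$ directly off the definition of strong access, use $\braket{\psi}{\phi}=v^T w-i\,v^T\hat{J}w$ to make overlaps of the selected representatives definite (hence OD), and then apply the OD theorem in contrapositive form, so that a nonzero CP trace-nonincreasing universal superposer on a domain not already OD by ordinary means must be an extra, post-quantum resource. Your further step, using the $\hat{J}$-covariance (descent) conditions to argue that a standard quantum operation cannot covertly exploit the flag, is more careful than the paper's proof sketch, which simply asserts this separation.
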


\begin{proof}[Proof sketch]
Strong access selects, for each ray $\hat{P}$, a representative $\ket{\psi_{\Gamma_F}}$ by aligning the flag to a fixed reference direction. This assignment is a phase convention by definition. Equation~(\ref{eq:overlap_from_real}) then makes the overlap between any two representatives a definite complex number. Hence the domain is OD. If the original CQM scenario supplied no such convention, the OD theorem forbids a nonzero CP trace-nonincreasing map that universally produces $a\ket{\psi}+b\ket{\phi}$ on that domain. Therefore the flag-derived primitive is an extra post-quantum resource, not a hidden quantum operation.
\end{proof}

This is the first sense in which the claim of Ref.~\cite{BarriosHita2026} must be read carefully. The protected quotient map may remove the symbol $i$ from coordinates, but it may not make the flag operational. The moment the flag becomes a phase reference for unknown states, it does exactly what the no-superposition theorem forbids in non-OD domains.

This observation gives a compact operational incompatibility. The following formulation separates the representational content of the quotient from the stronger real-operational reading.

\begin{theorem}[OD-postulate incompatibility]
\label{thm:OD_postulate}
For a generic domain of independently prepared unknown pure states, the following three requirements cannot be maintained simultaneously:
\begin{enumerate}
\item[(a)] physical pure states are quotient orbits $[v]=\{e^{\alpha \hat{J}}v\}$, as in CQM;
\item[(b)] the real representative space carries ordinary coherent operations that can add, compare, reflect about, or otherwise use the flag representatives of unknown inputs;
\item[(c)] the resulting theory obeys the OD/no-superposition constraint of standard quantum mechanics on non-OD domains.
\end{enumerate}
\end{theorem}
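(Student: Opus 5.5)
The plan is to argue by contradiction: assume (a), (b), and (c) hold together on a generic domain $\cD$ of independently prepared unknown pure states, and show that (a) and (b) jointly manufacture exactly the operation that (c) forbids. The argument has three steps. The first two use Proposition~\ref{prop:naive_superposition}; the third uses Theorem~\ref{thm:flag_OD}.

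\emph{Step 1: the pair (a)+(b) is either ill defined or supplies a convention.} Take the coherent operation granted by (b) that adds the flag representatives of two unknown inputs, the map $(v,w)\mapsto av+bw$ with $ab\neq0$. By (a), a physical input is only the orbit $[v]$, so this operation must be read as a rule on orbits, $([v],[w])\mapsto[av+bw]$. Proposition~\ref{prop:naive_superposition} shows this rule is not well defined on generic rays. The explicit witness is $w\mapsto\hat{J}w$, which turns the output ray $a\ket0+b\ket1$ into $a\ket0+ib\ket1$. There are therefore two cases.
\begin{enumerate}
\item[(i)] The operation depends on the representative. Then outcome statistics or posterior states distinguish points of one orbit, which violates the descent conditions (\ref{eq:commute_J}) and (\ref{eq:inst_cov}) and hence the identification in (a).
\item[(ii)] The theory picks out a particular representative $\Gamma_F([v])$ for every input in $\cD$. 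The operation is then a convention-fixed interference operation on arbitrary unknown inputs, which by Definition~\ref{def:strong_access} is strong phase access.
\end{enumerate}
The other operations named in (b) (compare, reflect about) are treated the same way. Comparing via the real inner product uses $v^T w$ and $v^T\hat{J}w$ separately, and these rotate under $w\mapsto e^{\alpha\hat J}w$. A reflection $\id-2vv^T/\|v\|^2$ does not commute with $\hat{J}$. So each of these operations either breaks (a) or forces a convention in the same way.

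\emph{Step 2: case (ii) yields OD.} In case (ii), Theorem~\ref{thm:flag_OD} shows that $\cD$ is overlap-determinable. Eq.~(\ref{eq:overlap_from_real}) then gives the definite complex datum $\braket{\psi_{\Gamma_F}}{\phi_{\Gamma_F}}$.

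\emph{Step 3: OD here contradicts (c).} The domain is generic and its states are independently prepared, so the ordinary CQM scenario supplies no phase convention, and $\cD$ is non-OD in standard quantum mechanics. Constraint (c) then says, via the OD theorem~\cite{BangChoYee2026OD}, that no nonzero CP trace-nonincreasing map universally produces $a\ket{\psi}+b\ket{\phi}$ on $\cD$. Yet the operation from (b), composed with $\hat{S}^{-1}$, is exactly such a map. This contradicts (c), so at least one of the three requirements must be dropped.

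The main obstacle is making ``generic'' precise enough for Step 3. We must show that independent preparation means the CQM scenario carries no phase correlation between inputs, and that the domain contains at least one pair of the kind used in Proposition~\ref{prop:naive_superposition}, with nonzero coefficients and inequivalent relative phase. I would handle this by requiring that $\cD$ be closed under independent local phase changes $\psi\mapsto e^{i\alpha}\psi$ and contain two non-orthogonal, non-parallel rays. I would then cite the OD theorem's characterization of non-OD domains directly, rather than re-deriving it. A secondary point is to check that the map from (b) is nonzero and trace nonincreasing, so that the OD theorem actually applies. Linearity of the representative-level map together with normalization by the success probability should suffice.
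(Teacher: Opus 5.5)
Your proposal is essentially the paper's own proof. Proposition~\ref{prop:naive_superposition} splits the operation into one that is ill defined on orbits or one that implicitly fixes representatives, Theorem~\ref{thm:flag_OD} turns the second case into OD access, and on a generic domain that is non-OD in CQM the OD theorem then clashes with (c).

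The one step to change is your closing plan to certify the (b) addition map as a nonzero CP trace-nonincreasing map via its ``linearity.'' The map $(v,w)\mapsto av+bw$ is linear on $V\oplus V$ but not on the joint input $v\otimes w$ of two independently prepared systems, and that is exactly the obstruction behind the no-superposition theorem. Instead, do as the paper does: take OD from your Step~2 and invoke the sufficiency direction of the OD theorem to obtain the superposer that (c) forbids.
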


\begin{proof}[Proof sketch]
If (a) holds, no operation may depend on which $v\in[v]$ was chosen unless a phase convention has been supplied. If (b) includes ordinary coherent use of representatives, Proposition~\ref{prop:naive_superposition} shows that the operation either is not well defined on quotient states or implicitly fixes representatives. The latter is exactly OD access by Theorem~\ref{thm:flag_OD}. On a generic non-OD domain, OD access enables a universal superposition primitive that (c) forbids. Therefore at least one of the three requirements must be abandoned.
\end{proof}

There are therefore two layers of flag access. Weak access, such as a single noncommuting readout, is already enough to destroy empirical equivalence to CQM because it makes global phase observable or representative-dependent. Strong access is the additional assumption that this phase information can be fed coherently into transformations on unknown inputs. The no-go consequences below use the strong layer. This distinction prevents an overstatement: the protected quotient construction is not claimed to perform the following tasks; the point is that any interpretation that makes the flag operational enough to support a real-amplitude dynamics would perform them.

%----------------------------------------------------------------------------------------------------------------------------------------------------------------------------------------------------------------
%\section{No-go cascade from making the flag physical}
%----------------------------------------------------------------------------------------------------------------------------------------------------------------------------------------------------------------

{\em No-go cascade from making the flag physical.}---The OD conflict is not only formal. The same resource reproduces the standard cascade of impossible tasks.

\begin{theorem}[Phase-access no-go cascade]
\label{thm:cascade}
In a strong phase-access quotient theory on arbitrary unknown inputs, the following tasks become possible.
\begin{enumerate}
\item[(i)] A linearly dependent set of pure states can be probabilistically cloned.
\item[(ii)] The resulting discrimination/cloning primitive can be combined with steering to transmit information without classical communication.
\item[(iii)] If strong phase access supplies unit-cost reflections about the current unknown state, unstructured search with a unique marked item can be solved using $O(\log N)$ oracle queries.
\end{enumerate}
\end{theorem}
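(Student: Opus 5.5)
The strategy is to reduce each item to a known superposition-based result. Theorem~\ref{thm:flag_OD} converts strong phase access into a universal convention-fixed superposition primitive on arbitrary unknown inputs. The literature cited in the introduction already shows that such a primitive implies each of the three tasks. So the proof is a chain of reductions, not a new construction. In each case I will check that the input of the cited protocol is exactly the representative data supplied by $\Gamma_F$, via Eq.~(\ref{eq:overlap_from_real}). I will also check that the protocol uses nothing else outside CQM.

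For (i), I will take the standard example from Refs.~\cite{Oszmaniec2016,Bandyopadhyay2020}. The set is $\{\ket0,\ket1,\ket+\}$ with $\ket+=(\ket0+\ket1)/\sqrt2$, which is linearly dependent. Given one copy of an unknown member, I will use the phase-fixed superposition map $\ket\psi\ket\phi\mapsto\propto a\ket\psi+b\ket\phi$. Applied to the input together with a fixed reference state, it yields with nonzero probability an output that distinguishes $\ket+$ from $\ket0,\ket1$. The idea is to build a state orthogonal to some members but not to others, which Duan--Guo~\cite{DuanGuo1998} forbids for dependent sets. Conclusive identification then gives cloning by re-preparation. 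The point to verify is that the superposition coefficients are well defined only because $\Gamma_F$ fixes the relative phase. Without it, Proposition~\ref{prop:naive_superposition} leaves the output ray ambiguous.

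For (ii), I will use the HJW/steering theorem~\cite{HJW1993}. Alice measures a maximally entangled pair in one of two bases, $\{\ket0,\ket1\}$ or $\{\ket+,\ket-\}$. This steers Bob into two ensembles with the same density operator $\id/2$. Bob applies the probabilistic discriminator from (i), or the superposition primitive directly. It assigns nonzero success to outcomes compatible with $\ket\pm$ but zero to $\ket0,\ket1$, or the reverse. Heralded success statistics therefore depend on Alice's choice, which is signaling. The subtlety is heralding: the conditional statistics must differ while the unconditioned statistics remain consistent. I expect this step to need care. I will follow the standard argument that any nonlinear or state-dependent map which distinguishes two decompositions of the same mixture signals. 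I will use linearity of CP maps on mixtures only on the CQM side.

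For (iii), I will follow Bao et al.~\cite{Bao2016} and the nonlinear-search literature~\cite{KumarParaoanu2011,Yee2020}. Start from the uniform state and alternate oracle calls with the assumed unit-cost reflection $2\ket\psi\!\bra\psi-\id$ about the \emph{current} unknown state. The goal is to show that the marked amplitude grows geometrically, or equivalently that the overlap angle doubles per round. Then $O(\log N)$ rounds reach constant overlap with the marked item. This contradicts the $\Omega(\sqrt N)$ lower bound~\cite{BBBV1997,Zalka1999}. I expect this to be the main obstacle. Reflecting about an unknown current state is not in general a linear map on the input. I first need to argue that the reflection is well defined only with a fixed representative, which strong access provides. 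I then need to compute the recurrence for the two-dimensional amplitudes $(\alpha_k,\beta_k)$ and show that the growth factor is bounded below by a constant greater than $1$.
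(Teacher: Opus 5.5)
Your plan follows the paper's route in all three parts: a flag-fixed superposer, then unambiguous discrimination and re-preparation against Duan--Guo; HJW steering with Bob's heralded device; and reflection about the current state. However, the step that carries (i) is only asserted. A superposer with a fixed reference gives conclusively distinguishable outputs only if the reference lies outside $\Span\{\ket{0},\ket{1}\}$; the paper appends $\ket{2}$. With a reference inside the span, the three outputs stay in a two-dimensional space, so they are linearly dependent and no USD of all three exists. Cloning also needs every member identified, not just $\ket{+}$ separated from the rest. So what you must show is that $\ket{\Psi_j}=\alpha\ket{\psi_j}+\beta\ket{2}$ are linearly independent. For your set this holds because the only relation, $\ket{0}+\ket{1}-\sqrt2\ket{+}=0$, has nonzero coefficient sum. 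The determinant is $(1-\sqrt2)\alpha^2\beta$, against $\alpha^2\beta$ for the paper's $\ket{-}$ version. There is also a convention check that neither you nor the paper spells out. Independence holds for real representatives with arbitrary signs, which is what a flag aligned along $\ket{0}_F$ gives for real vectors, but it can fail for other conventions. If $\Gamma_F$ returned $\ket{0}$, $i\ket{1}$, $e^{i\pi/4}\ket{+}$, one would get $\ket{\Psi_3}=\tfrac{1}{\sqrt2}\bigl(e^{i\pi/4}\ket{\Psi_1}+e^{-i\pi/4}\ket{\Psi_2}\bigr)$.

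In (iii) the angle triples rather than doubles. With $a_r=\sin\theta_r$, the oracle sends $\theta_r\mapsto-\theta_r$, and the reflection about $\ket{\psi_r}$ then gives $3\theta_r$. Hence $a_{r+1}=(3-4p_r)a_r$ and $p_{r+1}=p_r(3-4p_r)^2\ge 4p_r$ while $p_r\le 1/4$, which is the paper's recurrence. The sequence depends only on $p_0=1/N$, so you can stop at the first $r$ with $p_r\ge 1/4$, which comes after $O(\log N)$ rounds. The obstacle you anticipate is not one, and the lemma you plan for it would fail. Item (iii) takes the unit-cost reflection as a hypothesis. Moreover, $2\ketbra{\psi}{\psi}-\id$ depends only on the ray, not on a representative. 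What is non-quantum, as the paper notes, is that the reflection axis is the current, oracle-dependent state.

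Your (ii) is sound and sharper than the paper's ``generic second decomposition'' argument. When Alice measures $Z$, the discriminator outcome identifying $\ket{+}$ has probability exactly zero, because USD never errs on $\ket{0}$ or $\ket{1}$. When she measures $X$, that outcome has positive probability: $\ket{+}$ arrives half the time, and whatever happens on $\ket{-}$ contributes nonnegatively. So the heralding subtlety you worry about does not arise.
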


\begin{proof}[Proof sketch]
For (i), take
\begin{eqnarray}
\ket{\psi_1}=\ket{0},\quad
\ket{\psi_2}=\ket{1},\quad
\ket{\psi_3}=(\ket{0}-\ket{1})/\sqrt2,
\label{eq:dep_set}
\end{eqnarray}
which are linearly dependent. Append a fixed state $\ket{2}$ and use the phase convention supplied by the flag to implement, upon success, $\ket{\psi_j}\mapsto \alpha\ket{\psi_j}+\beta\ket{2}$ with $\alpha\beta\neq0$. In the basis $\{\ket{0},\ket{1},\ket{2}\}$ the three output columns have determinant $\alpha^2\beta$, hence are linearly independent. They can therefore be unambiguously discriminated with nonzero probability, and repreparation gives probabilistic cloning of the original dependent set. This contradicts the Duan-Guo boundary, according to which probabilistic cloning is possible exactly for linearly independent families~\cite{DuanGuo1998,Chefles1998}.

For (ii), the Hughston-Jozsa-Wootters theorem lets Alice remotely prepare different ensemble decompositions of the same reduced state on Bob's side~\cite{HJW1993}. If Bob has a device whose success statistics depend on the decomposition because it clones or discriminates the phase-accessible dependent set, repeated trials reveal Alice's choice. This is the standard steering route from forbidden state-dependent transformations to signaling. In flag language, the same danger appears even more directly: if branch phases or global flag angles are readable, phases of Kraus representatives that are gauge in CQM become signal-carrying data. The protected branch avoids this only because those phases remain $\hat{J}$-gauge.

For (iii), suppose that for the current, generally oracle-dependent state $\ket{\psi_r}$ one can implement $\hat{R}_{\psi_r}=2\ketbra{\psi_r}{\psi_r}-\id$ at unit cost. With the usual phase oracle $\hat{O}_f$ for a unique marked state $\ket{w}$, define $\ket{\psi_{r+1}}=\hat{R}_{\psi_r}\hat{O}_f\ket{\psi_r}$ and $a_r=\braket{w}{\psi_r}$. Then
\begin{eqnarray}
a_{r+1}=(3-4\abs{a_r}^2)a_r.
\label{eq:grover_rec}
\end{eqnarray}
Thus $\abs{a_r}^2$ grows by at least a factor four while it is at most $1/4$. Starting from the uniform state, $\abs{a_0}^2=1/N$, only $O(\log N)$ oracle calls are needed to reach constant success probability. Such unknown-state reflections are precisely OD resources, because $\hat{R}_\psi\ket{\phi}=2\braket{\psi}{\phi}\ket{\psi}-\ket{\phi}$ contains a convention-fixed overlap. They are unavailable in CQM and would collapse the usual Grover lower bound~\cite{Grover1997,Zalka1999,BBBV1997,KumarParaoanu2011,Yee2020}. Full protocols are given in the \SM.
\end{proof}

The theorem should not be misread. It does not say that the protected construction of Ref.~\cite{BarriosHita2026} already implements cloning, signaling, or logarithmic Grover search. It says the opposite: the construction remains quantum only because the flag is not an accessible phase reference. The no-go cascade is the operational cost of taking the flag as a real physical resource rather than as gauge.

%----------------------------------------------------------------------------------------------------------------------------------------------------------------------------------------------------------------
%\section{The operational dilemma for the quotient construction}
%----------------------------------------------------------------------------------------------------------------------------------------------------------------------------------------------------------------

{\em The operational dilemma for the quotient construction.}---We can now state the resulting interpretation.

\begin{theorem}[Quotient-space phase-flag dilemma]
\label{thm:dilemma}
A quotient-space real formulation based on the flag map $\hat{S}$ has two operationally distinct readings.
\begin{enumerate}
\item[(i)] In the protected reading, the $SO(2)$ flag orbit is gauge. Effects, instruments, dynamics, and composites preserve $\hat{J}$ as in Eqs.~(\ref{eq:Aphys})--(\ref{eq:balanced_iso}). The theory is empirically equivalent to CQM, but it is CQM in real notation rather than a generic real operational theory.
\item[(ii)] In the access reading, some flag operation is physical in a way that distinguishes or coherently controls representatives on the $SO(2)$ orbit. Then the theory either is not well defined on complex rays, or it splits a single complex ray into physically distinct states. If this access is strong on unknown domains, it is an OD resource and enters the post-quantum regime of Theorems~\ref{thm:flag_OD} and~\ref{thm:cascade}.
\end{enumerate}
\end{theorem}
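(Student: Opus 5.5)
The plan is to treat the statement as an exhaustive dichotomy. I split on one question: does every physical operation of the quotient theory descend to the orbit space $V/SO(2)$, i.e.\ is it invariant or covariant under $v\mapsto e^{\alpha\hat{J}}v$? The protected reading is the case where all effects, instruments, dynamics and composition rules satisfy this descent condition. The access reading is its negation. So the two readings exhaust the possibilities by construction, and the work is to establish the stated consequences in each branch.

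For (i), I will assume that every effect satisfies Eq.~(\ref{eq:orbit_invariance}), that every instrument satisfies Eq.~(\ref{eq:inst_cov}), and that composites are formed by Eq.~(\ref{eq:balanced}). Invoking the equivalence of Eq.~(\ref{eq:orbit_invariance}) with Eq.~(\ref{eq:commute_J}), effects lie in $\cA_{\phys}$. A real symmetric operator commuting with $\hat{J}$ is exactly the realification of a complex Hermitian operator on $H$. The check is that $\hat{J}$-linearity makes it $\C$-linear, and symmetry plus $\hat{J}^T=-\hat{J}$ gives Hermiticity of the form recovered in Eq.~(\ref{eq:overlap_from_real}). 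The analogous statement for $\hat{J}$-covariant CP maps on the real side gives complex CP maps. Eq.~(\ref{eq:balanced_iso}) identifies composites with $(H_A\otimes_{\C}H_B)_{\R}$. Probabilities then coincide with Born-rule values, since $v^T\hat{E}v=\langle\psi|E|\psi\rangle$. The theory is therefore CQM transported through $\hat{S}$, not a real-amplitude theory over $\otimes_{\R}$. The last point is witnessed by the fact that $V_A\bal V_B$ has real dimension $2d_Ad_B$ rather than the $4d_Ad_B$ of $V_A\otimes_{\R}V_B$.

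For (ii), some physical operation violates descent. That means there is an effect, instrument branch or dynamical map $\mathcal{O}$, and $v$ and $\alpha$, with $\mathcal{O}(v)\neq\mathcal{O}(e^{\alpha\hat{J}}v)$ at the level of statistics or posterior orbit. There are two sub-cases, depending on how the theory assigns states to complex rays. If states are still the orbits $[v]$, then $\mathcal{O}$ assigns two different outputs to one ray. This is the non-well-definedness alternative, and Proposition~\ref{prop:naive_superposition} supplies the explicit witness for coherent additive operations. If instead the theory declares the representatives to be states, then $v$ and $e^{\alpha\hat{J}}v$ are operationally distinguished while mapping to the same complex ray. This is the ray-splitting alternative. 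Finally, if the access is strong in the sense of Definition~\ref{def:strong_access}, then Theorem~\ref{thm:flag_OD} gives OD on the domain and Theorem~\ref{thm:cascade} gives the cascade; nothing new has to be proved there.

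The main obstacle is the operator-algebra step in (i): that $\hat{J}$-covariant real CP instruments are exactly realifications of complex CP instruments. The effect-level commutant identification is straightforward, but complete positivity over the balanced product needs care. My first attempt would work via real Kraus operators. I would show that covariance for all $\alpha$ lets one average over the $SO(2)$ orbit, which replaces the Kraus operators by $\hat{J}$-commuting ones. Those are then complex-linear, and Eq.~(\ref{eq:balanced_iso}) handles ancillas.
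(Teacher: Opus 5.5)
Your architecture matches the paper's proof. You split on whether every operation descends to the orbit $v\sim e^{\alpha\hat{J}}v$. Part (i) goes through the $\hat{J}$-commutant, Proposition~\ref{prop:complex_linear_SM} and Theorem~\ref{thm:balanced_iso_SM}. Part (ii) goes through the two sub-cases (representative-dependent probabilities versus observable global phase), then Theorems~\ref{thm:flag_OD} and~\ref{thm:cascade}.

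The one place you go beyond the paper is the instrument step in (i), and it fails as proposed. Write a real Kraus operator as $K=K_++K_-$ with $K_\pm=\tfrac12(K\mp\hat{J}K\hat{J})$, so $K_+$ commutes and $K_-$ anticommutes with $\hat{J}$. Since $e^{-\alpha\hat{J}}K_-e^{\alpha\hat{J}}=e^{-2\alpha\hat{J}}K_-$, averaging a covariant map over $SO(2)$ only kills cross terms. It returns $\rho\mapsto\sum_k\bigl[K_{k+}\rho K_{k+}^T+\tfrac12\bigl(K_{k-}\rho K_{k-}^T+\hat{J}K_{k-}\rho K_{k-}^T\hat{J}^T\bigr)\bigr]$, so the antilinear Kraus operators $K_{k-}$ and $\hat{J}K_{k-}$ survive.

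A concrete counterexample: take $C=\operatorname{diag}(\id_d,-\id_d)$ in the coordinates of Eq.~(\ref{eq:J_block_SM}), i.e.\ complex conjugation, with $C\hat{J}=-\hat{J}C$. The channel $\rho\mapsto\tfrac12(C\rho C^T+\hat{J}C\rho C^T\hat{J}^T)$ is real CP and satisfies Eq.~(\ref{eq:inst_cov}). On $\hat{J}$-commuting states it acts as the complex transpose $\sigma\mapsto\sigma^T$, which is not CP. For a qubit it reflects the Bloch $y$-axis while fixing $x$ and $z$, which is impossible in CQM. This is Wigner's theorem in real form: antilinear maps also descend to rays, so neither descent nor covariance forces complex linearity.

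What removes these components is composition, not averaging. One checks that $K_+\otimes\id$ preserves the balanced sector $N_{AB}^\perp=\ker(\hat{J}_A\otimes\id-\id\otimes\hat{J}_B)$, while $K_-\otimes\id$ maps it into $N_{AB}$. Hence the local extension of a channel to $V_A\bal V_B$ keeps only the Kraus operators $K_{k+}\otimes\id$. Their total weight is $\sum_kK_{k+}^TK_{k+}=\id-\sum_kK_{k-}^TK_{k-}$, which is the $\hat{J}$-commuting part of $\sum_kK_k^TK_k=\id$. The extension is therefore trace preserving on all balanced composites exactly when every $K_{k-}=0$.

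So you have two options:
\begin{itemize}
\item use extendability to $\bal$-composites as the decisive constraint, not merely to ``handle ancillas''; or
\item do what the paper does: read ``preserve $\hat{J}$'' in (i) as $\hat{J}$-commuting effects, Kraus operators and dynamics, and apply Proposition~\ref{prop:complex_linear_SM} directly.
\end{itemize}
With either repair, the rest of your argument, including part (ii), goes through.
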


\begin{proof}[Proof sketch]
Part (i) is exactly the protected boundary established in Ref.~\cite{BangChoBaek2026Hidden}: quotient probabilities and transformations must descend to the $SO(2)$ orbit, and composition must be balanced over $\hat{J}$. Part (ii) follows because a $\hat{J}$-noncommuting flag operation assigns different operational data to $v$ and $e^{\alpha \hat{J}}v$. If those are still declared to be one ray, probabilities depend on the chosen representative. If they are declared to be distinct states, complex global phase is observable. When this access can be used coherently on unknown inputs, it defines a phase convention and hence OD access; the no-go cascade follows.
\end{proof}

%----------------------------------------------------------------------------------------------------------------------------------------------------------------------------------------------------------------
%\section{Discussion}
%----------------------------------------------------------------------------------------------------------------------------------------------------------------------------------------------------------------

{\em Discussion.}---We have reformulated the quotient-space real representation of quantum mechanics as an operational phase-flag dilemma. The construction faithfully reproduces complex quantum mechanics only when the two-dimensional flag is protected as a gauge coordinate: all physical effects and instruments must commute with the hidden complex structure $\hat{J}$, and composite systems must be formed by the corresponding balanced tensor product. Once this protection is relaxed, the same flag becomes an operational phase reference that distinguishes representatives of a single complex ray. The resulting access branch is therefore not a harmless real rewriting of complex quantum mechanics, but a post-quantum extension whose strong coherent form enters the OD/no-go regime, enabling cloning-type transformations, steering-based signaling, and logarithmic-query search.

The main lesson of this study is that the quotient construction in Ref.~\cite{BarriosHita2026} has no third interpretation. It can be a protected gauge rewriting of CQM, in which case it is consistent but not an independent real-amplitude operational theory. Or it can be read as giving physical status to the flag, in which case it supplies exactly the phase convention that OD and the no-superposition theorem identify as unavailable for generic unknown states. The latter reading conflicts with the same principles that prohibit cloning of linearly dependent states, superluminal signaling from steering, and super-Grover search.

This also clarifies the relation to real-vs-complex network tests. Those experiments do not rule out a representation of CQM in real notation. They rule out the ordinary real-amplitude foil with the ordinary tensor product and unrestricted real operations. The quotient construction evades the tests by changing the composition rule and the observable algebra~\cite{BangChoBaek2026Hidden}. The present OD analysis adds that one cannot then re-interpret the hidden flag as an accessible real subsystem without leaving quantum theory.

Thus, complex coordinates may be dispensable, but complex phase structure is not. In the protected branch it appears as $\hat{J}$-superselection and balanced composition. In the access branch it becomes a phase oracle. The first is standard quantum mechanics in real clothes; the second is a post-quantum resource.

%----------------------------------------------------------------------------------------------------------------------------------------------------------------------------------------------------------------
{\em Note.}---The \SM~contains a compact restatement of the hidden-$\hat{J}$ facts used from Ref.~\cite{BangChoBaek2026Hidden}, the proof that strong flag access induces OD access, and explicit cloning, steering/signaling, and logarithmic-query search protocols.

%----------------------------------------------------------------------------------------------------------------------------------------------------------------------------------------------------------------
\begin{acknowledgments}
{\em Acknowledgments.}---This work was supported by the Ministry of Science, ICT and Future Planning (MSIP) by the National Research Foundation of Korea (RS-2024-00432214); the Institute of Information and Communications Technology Planning and Evaluation grant funded by the Korean government (RS-2019-II190003, ``Research and Development of Core Technologies for Programming, Running, Implementing and Validating of Fault-Tolerant Quantum Computing System''); the Ministry of Trade, Industry and Resources (MOTIR), Korea, under the project ``Industrial Technology Infrastructure Program'' (RS-2024-00466693); the Korean ARPA-H Project through the Korea Health Industry Development Institute (KHIDI), funded by the Ministry of Health \& Welfare, Republic of Korea (RS-2025-25456722). We acknowledge the Yonsei University Quantum Computing Project Group for providing support and access to the Quantum System One (Eagle Processor), which is operated at Yonsei University.
\end{acknowledgments}

%apsrev4-2.bst 2019-01-14 (MD) hand-edited version of apsrev4-1.bst
%Control: key (0)
%Control: author (72) initials jnrlst
%Control: editor formatted (1) identically to author
%Control: production of article title (-1) disabled
%Control: page (0) single
%Control: year (1) truncated
%Control: production of eprint (0) enabled
%

%==============================================================================================================================
%==============================================================================================================================
%==============================================================================================================================

\clearpage
\newpage
\onecolumngrid

% Restore TOC entries for Supplementary only
\makeatletter
\let\addcontentsline\origaddcontentsline
\makeatother

\part{Supplemental Material for ``Phase-Flag Access and No-Go Constraints on Quotient-Space Real Quantum Mechanics''}

%\tableofcontents

%=====================================================================================================================
\section{Purpose and relation to arXiv:2607.05865}\label{sec:purpose_SM}
%=====================================================================================================================

This Supplemental Material gives the detailed proof and protocol layer behind the Letter. The Letter is written as a follow-up to Ref.~\cite{BangChoBaek2026Hidden}. That preceding work isolated the hidden-complex-structure content of quotient-space real quantum mechanics (RNQM): empirical equivalence to complex quantum mechanics (CQM) requires a distinguished real linear complex structure $\hat{J}$, a $\hat{J}$-preserving observable and dynamical algebra, and a balanced tensor product that is isomorphic to the complex tensor product. We therefore do not present the hidden-$\hat{J}$ reconstruction as the main new claim of the present paper. We restate the relevant facts here only to make the later no-go protocols self-contained.

The new focus is the operational status of the phase flag in the quotient construction of Ref.~\cite{BarriosHita2026}. If the flag is protected, it is a gauge coordinate and the construction is CQM in real notation. If the flag is treated as an accessible real degree of freedom, then it supplies an operational phase convention for unknown rays. In the terminology of the overlap-determinability (OD) analysis~\cite{BangChoYee2026OD}, this is precisely the missing resource that would make otherwise ill-defined coherent superpositions of unknown states into convention-fixed operations. Once promoted to a generic coherent primitive, such phase access yields the usual post-quantum consequences: probabilistic cloning of a linearly dependent family, steering-based signaling, and logarithmic-query Grover search.

We use the following notation. If $H$ is a complex Hilbert space, $H_{\R}$ denotes the same additive group regarded as a real vector space. Multiplication by $i$ is denoted by $\hat{J}$. Thus $\hat{J}^2=-\id$. The real inner product is $(v,w)_{\R}=\operatorname{Re}\braket{v}{w}_{\C}$. For a real space $V$, $\Sym(V)$ denotes the real symmetric operators on $V$. The symbol $\bal$ denotes the balanced tensor product over the hidden complex structures.

%=====================================================================================================================
\section{Realification, the flag, and the $\hat{J}$-commutant}\label{sec:realification}
%=====================================================================================================================

This section records, in explicit matrix form, the protected hidden-$\hat{J}$ structure used from Ref.~\cite{BangChoBaek2026Hidden}. These details are included so that the later access-branch protocols can be read without referring back to the previous paper.

Let $H\simeq \C^d$. In a fixed basis, write $\psi=x+i y$ with $x,y\in\R^d$. The realification $V=H_{\R}$ identifies $\psi$ with $(x,y)\in\R^d\oplus\R^d$. Multiplication by $i$ becomes
\begin{eqnarray}
\hat{J}(x,y)=(-y,x),
\quad
\hat{J}=\begin{pmatrix}0&-\id_d\\ \id_d&0\end{pmatrix},
\quad
\hat{J}^2=-\id_{2d}.
\label{eq:J_block_SM}
\end{eqnarray}
Moreover $\hat{J}^T=-\hat{J}$ and $\hat{J}^T \hat{J}=\id$.

The flag form used in Ref.~\cite{BarriosHita2026} is the same construction written as
\begin{eqnarray}
\hat{S}\ket{\psi}=\operatorname{Re}\ket{\psi}\otimes\ket{0}_F+\operatorname{Im}\ket{\psi}\otimes\ket{1}_F,
\label{eq:S_SM}
\end{eqnarray}
with
\begin{eqnarray}
\hat{J}_F\ket{0}_F=\ket{1}_F,
\quad
\hat{J}_F\ket{1}_F=-\ket{0}_F,
\quad
\hat{J}_F=\begin{pmatrix}0&-1\\1&0\end{pmatrix}.
\label{eq:JF_SM}
\end{eqnarray}
The global phase orbit becomes
\begin{eqnarray}
\hat{S}(e^{i\alpha}\psi)=e^{\alpha \hat{J}}\hat{S}(\psi).
\label{eq:phase_orbit_SM}
\end{eqnarray}

\begin{proposition}[Complex-linear maps]
\label{prop:complex_linear_SM}
A real linear map $\hat{A}_R\in\End_{\R}(V)$ is the realification of a complex-linear map $\hat{A}\in\End_{\C}(H)$ if and only if
\begin{eqnarray}
[\hat{A}_R,\hat{J}]=0.
\label{eq:complex_linear_comm_SM}
\end{eqnarray}
In block form this means
\begin{eqnarray}
\hat{A}_R=\begin{pmatrix}X&-Y\\Y&X\end{pmatrix},
\label{eq:block_comm_SM}
\end{eqnarray}
which represents the complex matrix $\hat{A}=X+iY$.
\end{proposition}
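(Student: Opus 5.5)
The plan is to prove both directions by a direct block computation in the fixed basis. I will write $\hat{A}_R$ in $d\times d$ blocks as $\begin{pmatrix}P&Q\\ R&T\end{pmatrix}$ and use the explicit form of $\hat{J}$ from Eq.~(\ref{eq:J_block_SM}). The key point is that complex linearity of a real-linear map on $H_{\R}$ amounts to commuting with multiplication by $i$, because real linearity is already given. Since $\hat{J}$ is exactly multiplication by $i$, the equivalence is almost definitional, and the block form makes it explicit.

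\emph{Forward direction.} Suppose $\hat{A}_R$ realifies a complex-linear $\hat{A}$. Then $\hat{A}(i\psi)=i\hat{A}\psi$ for all $\psi$. Translating through the identification $\psi\leftrightarrow(x,y)$, and using that $\hat{J}$ represents multiplication by $i$, this gives $\hat{A}_R\hat{J}=\hat{J}\hat{A}_R$.

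To obtain the block form concretely, I will write $\hat{A}=X+iY$ with $X,Y$ real and expand:
\begin{eqnarray}
(X+iY)(x+iy)=(Xx-Yy)+i(Yx+Xy).
\end{eqnarray}
This reads off $\hat{A}_R=\begin{pmatrix}X&-Y\\Y&X\end{pmatrix}$, which is Eq.~(\ref{eq:block_comm_SM}). One can then check directly that this matrix commutes with $\hat{J}$.

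\emph{Converse.} Suppose $[\hat{A}_R,\hat{J}]=0$. Computing both products in block form gives
\begin{eqnarray}
\hat{A}_R\hat{J}=\begin{pmatrix}Q&-P\\ T&-R\end{pmatrix},\qquad \hat{J}\hat{A}_R=\begin{pmatrix}-R&-T\\ P&Q\end{pmatrix}.
\end{eqnarray}
Equating the blocks yields $T=P$ and $R=-Q$. Setting $X=P$ and $Y=R$ puts $\hat{A}_R$ in the form of Eq.~(\ref{eq:block_comm_SM}). By the forward computation, this matrix is the realification of $\hat{A}=X+iY$, which is complex linear.

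I expect no real obstacle. The only point needing care is the sign bookkeeping: the off-diagonal blocks must come out as $-Y$ in the upper right and $+Y$ in the lower left, consistent with the convention $\hat{J}(x,y)=(-y,x)$. I will also note that the correspondence $\hat{A}\mapsto\hat{A}_R$ is injective, so $\hat{A}$ is uniquely determined by $\hat{A}_R$.
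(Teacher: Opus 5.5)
Your proof is correct and takes essentially the same route as the paper: the forward direction is read off from $\hat{A}(i\psi)=i\hat{A}\psi$, and the converse is the same block computation giving $T=P$ and $R=-Q$, with $X=P$ and $Y=R$. Your explicit expansion of $(X+iY)(x+iy)$ also makes explicit a step the paper leaves implicit, namely that the resulting block matrix is indeed the realification of $X+iY$.
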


\begin{proof}---If $\hat{A}$ is complex linear, $\hat{A}(i\psi)=i\hat{A}\psi$, which is $\hat{A}_R\hat{J}=\hat{J}\hat{A}_R$. Conversely, write
\begin{eqnarray}
\hat{A}_R=\begin{pmatrix}P&Q\\R&S\end{pmatrix}.
\end{eqnarray}
The equation $\hat{A}_R\hat{J}=\hat{J}\hat{A}_R$ gives $Q=-R$ and $S=P$. Setting $X=P$ and $Y=R$ gives Eq.~(\ref{eq:block_comm_SM}).
\end{proof}

For Hermitian $\hat{A}=\hat{A}^{\dagger}$, the real representative is real symmetric and $\hat{J}$-commuting:
\begin{eqnarray}
\hat{T}(\hat{A})=\operatorname{Re}(\hat{A})\otimes \hat{I}_F+\operatorname{Im}(\hat{A})\otimes \hat{J}_F.
\label{eq:T_SM}
\end{eqnarray}
Thus the real observable algebra of complex quantum mechanics is not $\Sym(2d,\R)$, but
\begin{eqnarray}
\mathcal A_{\phys}=\Sym(V)\cap\{\hat{J}\}'.
\label{eq:phys_alg_SM}
\end{eqnarray}

%=====================================================================================================================
\section{$\hat{J}$-superselection and forbidden flag measurements}\label{sec:J_superselection}
%=====================================================================================================================

The physical pure state in complex quantum mechanics is a ray. In the real flag representation, this is the orbit
\begin{eqnarray}
v\sim e^{\alpha \hat{J}}v, \quad \alpha\in\R.
\label{eq:orbit_SM}
\end{eqnarray}
Operational equivalence means that all allowed effects assign identical probabilities to all representatives on the same orbit.

\begin{theorem}[$\hat{J}$-superselection]
\label{thm:J_super_SM}
Let $V$ be a finite-dimensional real Hilbert space with $\hat{J}^2=-\id$ and $\hat{J}^T=-\hat{J}$. Let $\hat{E}=\hat{E}^T$ be a real effect. The following are equivalent:
\begin{enumerate}
\item[(i)] $v^T \hat{E} v=(e^{\alpha \hat{J}}v)^T \hat{E}(e^{\alpha \hat{J}}v)$ for all $v\in V$ and $\alpha\in\R$.
\item[(ii)] $e^{-\alpha \hat{J}}\hat{E}e^{\alpha \hat{J}}=\hat{E}$ for all $\alpha\in\R$.
\item[(iii)] $[\hat{E},\hat{J}]=0$.
\end{enumerate}
\end{theorem}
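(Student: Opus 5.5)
I will prove the chain (i)$\Leftrightarrow$(ii)$\Leftrightarrow$(iii). Since $\hat{J}^T=-\hat{J}$, each $U_\alpha:=e^{\alpha\hat{J}}$ is orthogonal, with $U_\alpha^T=e^{-\alpha\hat{J}}=U_\alpha^{-1}$. Because $\hat{J}^2=-\id$, it also has the closed form $U_\alpha=\cos\alpha\,\id+\sin\alpha\,\hat{J}$. I will use this explicit form in place of abstract Lie-group arguments.

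\emph{(i)$\Leftrightarrow$(ii).} Rewrite the right-hand side of (i) as $v^T(U_\alpha^T\hat{E}U_\alpha)v=v^T(e^{-\alpha\hat{J}}\hat{E}e^{\alpha\hat{J}})v$. The operator $\hat{E}_\alpha:=e^{-\alpha\hat{J}}\hat{E}e^{\alpha\hat{J}}$ is again symmetric, since its transpose is $U_\alpha^T\hat{E}^TU_\alpha=\hat{E}_\alpha$. So (i) says the quadratic forms of the two symmetric operators $\hat{E}$ and $\hat{E}_\alpha$ agree on all of $V$. By real polarization, $u^TAw=\tfrac14[(u+w)^TA(u+w)-(u-w)^TA(u-w)]$ for symmetric $A$, a real symmetric operator is determined by its quadratic form. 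Hence (i) implies $\hat{E}_\alpha=\hat{E}$ for every $\alpha$, which is (ii). The converse is immediate by sandwiching (ii) between $v^T$ and $v$.

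\emph{(ii)$\Leftrightarrow$(iii).} For (iii)$\Rightarrow$(ii): if $\hat{E}$ commutes with $\hat{J}$, it commutes with every power of $\hat{J}$, hence with $U_\alpha=\cos\alpha\,\id+\sin\alpha\,\hat{J}$. So $U_\alpha^{-1}\hat{E}U_\alpha=\hat{E}$. For (ii)$\Rightarrow$(iii), I will either differentiate $\hat{E}_\alpha$ at $\alpha=0$, which gives $-\hat{J}\hat{E}+\hat{E}\hat{J}=0$, or avoid calculus by choosing $\alpha=\pi/2$. In the latter case $U_{\pi/2}=\hat{J}$, and (ii) reads $\hat{J}^{-1}\hat{E}\hat{J}=\hat{E}$, i.e. $\hat{E}\hat{J}=\hat{J}\hat{E}$. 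This second route shows that invariance under the single quarter-turn already suffices.

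\emph{Main obstacle.} There is no real difficulty here. The only point needing care is the polarization step in (i)$\Rightarrow$(ii). Equality of expectation values $v^T\hat{E}v$ alone recovers only the symmetric part of an operator. The argument therefore relies on both $\hat{E}$ and $\hat{E}_\alpha$ being symmetric, which is where $\hat{E}=\hat{E}^T$ and the orthogonality of $U_\alpha$ (from the skew-symmetry of $\hat{J}$) are used. I will state this explicitly, so that it is clear the hypothesis $\hat{J}^T=-\hat{J}$ is essential and not merely $\hat{J}^2=-\id$.
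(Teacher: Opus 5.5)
Your proposal is correct and follows essentially the same route as the paper: (i)$\Rightarrow$(ii) because $e^{-\alpha\hat{J}}\hat{E}e^{\alpha\hat{J}}-\hat{E}$ is symmetric (using orthogonality of $e^{\alpha\hat{J}}$) and so is determined by its quadratic form; (ii)$\Rightarrow$(iii) by differentiating at $\alpha=0$; and the converse because $\hat{E}$ commutes with $e^{\alpha\hat{J}}$. Your calculus-free alternative, taking $\alpha=\pi/2$ in $e^{\alpha\hat{J}}=\cos\alpha\,\id+\sin\alpha\,\hat{J}$, is a valid small shortcut for (ii)$\Rightarrow$(iii) that the paper does not use, and it shows that invariance under the quarter-turn alone already suffices.
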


\begin{proof}---Since $(e^{\alpha \hat{J}})^T=e^{-\alpha \hat{J}}$, condition (i) is equivalent to
\begin{eqnarray}
v^T\left(e^{-\alpha \hat{J}}\hat{E}e^{\alpha \hat{J}}-\hat{E}\right)v=0
\label{eq:quadratic_SM}
\end{eqnarray}
for all $v$. The matrix in parentheses is symmetric, hence Eq.~(\ref{eq:quadratic_SM}) for all $v$ implies (ii). Differentiating (ii) at $\alpha=0$ gives
\begin{eqnarray}
-\hat{J}\hat{E}+\hat{E}\hat{J}=0,
\end{eqnarray}
which is (iii). Conversely, if $[\hat{E},\hat{J}]=0$, then $\hat{E}$ commutes with $e^{\alpha \hat{J}}$ for all $\alpha$, so (ii) and hence (i) hold.
\end{proof}

\begin{corollary}[Flag-only effects are not physical in the protected branch]
\label{cor:flag_only_SM}
The projectors
\begin{eqnarray}
\hat{E}_0=\id_d\otimes\ketbra{0}{0}_F,
\quad
\hat{E}_1=\id_d\otimes\ketbra{1}{1}_F
\label{eq:E01_SM}
\end{eqnarray}
are not physical effects in a real representation of complex quantum mechanics.
\end{corollary}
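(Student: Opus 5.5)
The plan is to apply Theorem~\ref{thm:J_super_SM}, criterion (iii), and show that $\hat{E}_0$ and $\hat{E}_1$ fail to commute with $\hat{J}$. In the flag form, $\hat{J}=\id_d\otimes\hat{J}_F$ (Eq.~(\ref{eq:JF_SM})). The first step is therefore to compute the commutator directly:
\begin{eqnarray}
[\hat{E}_0,\hat{J}]=\id_d\otimes[\ketbra{0}{0}_F,\hat{J}_F].
\end{eqnarray}
This reduces the problem to a $2\times2$ calculation. We have $\ketbra{0}{0}_F\hat{J}_F=-\ketbra{0}{1}_F$ and $\hat{J}_F\ketbra{0}{0}_F=\ketbra{1}{0}_F$. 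Hence the commutator equals $-(\ketbra{0}{1}_F+\ketbra{1}{0}_F)=-\hat{\sigma}_x$, which is nonzero. Since $\hat{E}_1=\id-\hat{E}_0$, it follows that $[\hat{E}_1,\hat{J}]=-[\hat{E}_0,\hat{J}]\neq0$ as well.

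Next, by Theorem~\ref{thm:J_super_SM}, the failure of (iii) is equivalent to the failure of (i). So there must exist a representative $v$ and a phase $\alpha$ for which the probabilities differ. To make this concrete rather than rely only on the equivalence, I will exhibit an explicit witness. Take $v=\hat{S}\ket{0}=\ket{0}\otimes\ket{0}_F$ and $\alpha=\pi/2$. Then $e^{\alpha\hat{J}}v=\hat{J}v=\hat{S}(i\ket0)=\ket0\otimes\ket1_F$. This gives $v^T\hat{E}_0v=1$, whereas $(\hat{J}v)^T\hat{E}_0(\hat{J}v)=0$. The two vectors represent the same complex ray $\ket0$, yet $\hat{E}_0$ assigns them probabilities $1$ and $0$. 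For a general angle, the probability is $\cos^2\alpha$ under $\hat{E}_0$ and $\sin^2\alpha$ under $\hat{E}_1$.

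Finally, I will conclude that these effects are not in $\cA_{\phys}$ of Eq.~(\ref{eq:phys_alg_SM}). Such an effect would either be ill defined on rays or make global phase observable, so neither is a physical effect in a real representation of CQM. No step is a real obstacle. The only care needed is to fix the identification $\hat{J}=\id_d\otimes\hat{J}_F$ consistently with the block form in Eq.~(\ref{eq:J_block_SM}), that is, to use the same tensor ordering convention throughout.
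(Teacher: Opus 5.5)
Your proposal is correct and follows essentially the same route as the paper. The paper shows the commutator with $\hat{J}$ is nonzero and exhibits $x\otimes\ket{0}_F$ and $\hat{J}(x\otimes\ket{0}_F)=x\otimes\ket{1}_F$, two representatives of one ray that receive probabilities $\norm{x}^2$ and $0$; you take $x=\ket{0}$. Your explicit commutator $[\ketbra{0}{0}_F,\hat{J}_F]=-(\ketbra{0}{1}_F+\ketbra{1}{0}_F)$, the careful identification $\hat{J}=\id_d\otimes\hat{J}_F$, and the remark $\hat{E}_1=\id-\hat{E}_0$ fill in details the paper's two-line proof leaves implicit.
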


\begin{proof}---Using Eq.~(\ref{eq:JF_SM}), $[\hat{E}_0,\hat{J}_F]\neq0$. Equivalently, let $x\in\R^d$ be nonzero. The two representatives $x\otimes\ket{0}_F$ and $\hat{J}(x\otimes\ket{0}_F)=x\otimes\ket{1}_F$ differ by the complex global phase $i$, but
\begin{eqnarray}
(x\otimes\ket{0})^T \hat{E}_0(x\otimes\ket{0})=\norm{x}^2,
\quad
(x\otimes\ket{1})^T \hat{E}_0(x\otimes\ket{1})=0.
\end{eqnarray}
Thus $\hat{E}_0$ distinguishes representatives of a single complex ray.
\end{proof}

\begin{theorem}[Postulate-level incompatibility]
\label{thm:postulate_incompat_SM}
For the flag representation $\hat{S}$, the following three assumptions are mutually incompatible:
\begin{enumerate}
\item[(a)] every real positive effect on the flag-extended real Hilbert space is physically admissible;
\item[(b)] the orbit $v\sim e^{\alpha \hat{J}}v$ represents one physical complex ray;
\item[(c)] the theory is empirically equivalent to complex quantum mechanics.
\end{enumerate}
\end{theorem}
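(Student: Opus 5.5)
I will argue by contradiction: assume (a), (b) and (c) all hold and derive a clash. By (a), every real positive effect on $V=\R^d\otimes\R^2_F$ is admissible. In particular the flag projector $\hat{E}_0=\id_d\otimes\ketbra{0}{0}_F$ of Corollary~\ref{cor:flag_only_SM} is admissible. It is real symmetric, and $0\le\hat{E}_0\le\id$, so it is a legitimate effect in the real theory.

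The core step is to invoke Theorem~\ref{thm:J_super_SM}. Since $[\hat{E}_0,\hat{J}]\neq0$, condition (iii) fails, hence so does (i). Concretely I will reuse the witness from Corollary~\ref{cor:flag_only_SM}: for nonzero $x\in\R^d$, the vectors $v=x\otimes\ket{0}_F$ and $\hat{J}v=e^{(\pi/2)\hat{J}}v=x\otimes\ket{1}_F$ give outcome probabilities $\norm{x}^2$ and $0$ (after normalizing $x$, probabilities $1$ and $0$).

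Now I split on how (b) is read. By (b), $v$ and $\hat{J}v$ represent the same complex ray, namely $\ket{x}$ and $i\ket{x}$. Either the admissible effect $\hat{E}_0$ assigns a probability to the physical state, or it does not.

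If it does, the probability is not a function of the ray, because it takes two different values on one ray. The outcome statistics are then representative-dependent, so the theory is not well defined on rays.

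If instead the two representatives are treated as operationally distinct, the theory contains an experiment distinguishing $\ket{x}$ from $i\ket{x}$ with certainty. No CQM effect can do this, since $\Tr(\hat{E}\ketbra{\psi}{\psi})$ is phase invariant. This violates (c).

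Finally, I will briefly show that each pair is consistent, so the incompatibility is genuinely three-way. (b)+(c) is the protected branch with $\cA_{\phys}$ as in Eq.~(\ref{eq:phys_alg_SM}). (a)+(b) gives a representative-dependent non-CQM theory. (a)+(c) fails outright by the same witness, unless (b) is dropped and rays are split. This makes precise that admitting all of $\Sym(V)_{\ge0}$ forces abandoning either the ray identification or empirical equivalence.

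The main obstacle is the formal handling of the first branch. I must make precise that ``(b) holds'' means physical states \emph{are} orbits, so that an admissible effect must define a function on orbits. I will phrase (b) as requiring that probabilities of admissible effects descend to the quotient, i.e.\ condition (i) of Theorem~\ref{thm:J_super_SM} for all admissible $\hat{E}$. With that phrasing, (a)+(b) alone is already contradictory via the witness, and (c) is needed only to block the escape route of splitting rays.
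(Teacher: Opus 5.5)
Your core argument is correct and is essentially the paper's proof. Assumption (a) admits $\hat{E}_0=\id_d\otimes\ketbra{0}{0}_F$. The witness of Corollary~\ref{cor:flag_only_SM} shows that $\hat{E}_0$ separates $v=x\otimes\ket{0}_F$ from $\hat{J}v=x\otimes\ket{1}_F$, which represent $\ket{x}$ and $i\ket{x}$. This clashes with (b), and, since CQM probabilities are phase invariant, also with (c). Your case split on how (b) is read, and your choice to formalize (b) as descent of probabilities to the orbit quotient (condition (i) of Theorem~\ref{thm:J_super_SM}), simply make explicit what the paper states in two sentences.

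The step that fails is your proposed final paragraph. You claim that each pair of assumptions is consistent, so that the incompatibility is ``genuinely three-way.'' That is false, and it contradicts your own argument.
\begin{itemize}
\item Under your descent phrasing, (a)+(b) is already contradictory, as your last paragraph itself says. A ``representative-dependent theory'' is exactly one in which (b) fails.
\item (a)+(c) is contradictory by the same witness. Splitting rays does not rescue it: your own second branch shows that a theory in which $\ket{x}$ and $i\ket{x}$ are operationally distinct violates (c).
\end{itemize}
The paper's proof in fact reaches the stronger conclusion that (a) is incompatible with (b) alone and with (c) alone. The only consistent pair is (b)+(c), the protected branch with effects in $\Sym(V)\cap\{\hat{J}\}'$.

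The theorem only asserts that the three assumptions cannot hold simultaneously. So delete the pairwise-consistency claim, or replace it with the correct observation above, and your proof is complete.
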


\begin{proof}---Assume (a). Then the positive projector $\hat{E}_0=\id_d\otimes\ketbra{0}{0}_F$ is admissible. By Corollary~\ref{cor:flag_only_SM}, $\hat{E}_0$ assigns different probabilities to $v=x\otimes\ket{0}_F$ and $\hat{J}v=x\otimes\ket{1}_F$. These two vectors differ only by multiplication of the corresponding complex vector by the global phase $i$. Hence (a) contradicts (b). Since complex quantum mechanics cannot distinguish global-phase representatives, the same effect contradicts (c). Therefore at least one assumption must be rejected. The protected quotient construction rejects (a) by restricting the physical effects to the $\hat{J}$-commutant.
\end{proof}

The same restriction applies to POVMs effect by effect. It also applies to instruments. Let $\{\hat{\mathcal I}_r\}$ be a real instrument on density matrices. A sufficient covariance condition for compatibility with the gauge is
\begin{eqnarray}
\hat{\mathcal I}_r(e^{\alpha \hat{J}}\hat{\rho} e^{-\alpha \hat{J}})=e^{\alpha \hat{J}}\hat{\mathcal I}_r(\hat{\rho})e^{-\alpha \hat{J}} \quad (\forall r,\alpha).
\label{eq:instrument_cov_SM}
\end{eqnarray}
Without this condition, either the outcome probabilities or the posterior states leak flag-phase information.

%=====================================================================================================================
\section{Balanced tensor product and underdetermination of P4}\label{sec:balanced_P4}
%=====================================================================================================================

We also restate the composition fact from Ref.~\cite{BangChoBaek2026Hidden}. The quotient product is not the ordinary real tensor product. It is the real balanced product that implements complex scalar balancing.

\begin{definition}[Balanced real tensor product over $\hat{J}$]
Let $(V_A,\hat{J}_A)$ and $(V_B,\hat{J}_B)$ be real vector spaces with complex structures. Define
\begin{eqnarray}
V_A\bal V_B=\frac{V_A\otimes_{\R}V_B}{N_{AB}},
\quad
N_{AB}=\Span_{\R}\{\hat{J}_Av\otimes w-v\otimes \hat{J}_Bw\}.
\label{eq:balanced_def_SM}
\end{eqnarray}
The equivalence class of $v\otimes w$ is denoted $v\bal w$.
\end{definition}

The defining relation is
\begin{eqnarray}
\hat{J}_Av\bal w=v\bal \hat{J}_Bw.
\label{eq:balanced_relation_SM}
\end{eqnarray}
Applying it twice gives
\begin{eqnarray}
\hat{J}_Av\bal \hat{J}_Bw=v\bal \hat{J}_B^2w=-v\bal w.
\label{eq:double_J_SM}
\end{eqnarray}

\begin{theorem}[Balanced quotient equals complex tensor product]
\label{thm:balanced_iso_SM}
Let $H_A$ and $H_B$ be complex Hilbert spaces, with realifications $(V_A,\hat{J}_A)$ and $(V_B,\hat{J}_B)$. The map
\begin{eqnarray}
\Phi:V_A\bal V_B\to (H_A\otimes_{\C}H_B)_{\R},
\quad
\Phi(v\bal w)=v\otimes_{\C}w
\label{eq:Phi_SM}
\end{eqnarray}
is a canonical real-linear isomorphism.
\end{theorem}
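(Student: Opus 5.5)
The plan is to use the universal property of the real tensor product to get a map out of $V_A\otimes_{\R}V_B$, check that it kills $N_{AB}$ so that it descends to the quotient, and then prove bijectivity by a dimension count plus surjectivity.

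\textbf{Step 1: the bilinear map.} Define $\tilde\Phi:V_A\otimes_{\R}V_B\to(H_A\otimes_{\C}H_B)_{\R}$ by $\tilde\Phi(v\otimes w)=v\otimes_{\C}w$. This is legitimate because $(v,w)\mapsto v\otimes_{\C}w$ is complex bilinear, hence in particular real bilinear.

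\textbf{Step 2: descent to the quotient.} On a generator of $N_{AB}$ we have
\[
\tilde\Phi(\hat{J}_Av\otimes w-v\otimes\hat{J}_Bw)=(iv)\otimes_{\C}w-v\otimes_{\C}(iw)=0,
\]
using the balancing of $\otimes_{\C}$ over complex scalars. Hence $N_{AB}\subseteq\ker\tilde\Phi$, and $\Phi$ is well defined and real linear. It is canonical because no basis was used.

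\textbf{Step 3: surjectivity.} Elementary tensors $v\otimes_{\C}w$ span $H_A\otimes_{\C}H_B$ over $\C$. Since $i(v\otimes_{\C}w)=(\hat{J}_Av)\otimes_{\C}w$ is again an elementary tensor in the image, the complex span coincides with the real span of the image. So $\Phi$ is onto.

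\textbf{Step 4: injectivity by dimension count.} Take complex dimensions $d_A,d_B$, so $\dim_{\R}(H_A\otimes_{\C}H_B)_{\R}=2d_Ad_B$. It suffices to show $\dim_{\R}(V_A\bal V_B)\le 2d_Ad_B$. Choose complex bases $\{e_j\}$ of $H_A$ and $\{f_k\}$ of $H_B$; then the real bases are $\{e_j,\hat{J}_Ae_j\}$ and $\{f_k,\hat{J}_Bf_k\}$. The $4d_Ad_B$ real basis tensors reduce in the quotient to the $2d_Ad_B$ classes $e_j\bal f_k$ and $\hat{J}_Ae_j\bal f_k$, because
\[
e_j\bal\hat{J}_Bf_k=\hat{J}_Ae_j\bal f_k
\]
by Eq.~(\ref{eq:balanced_relation_SM}), and
\[
\hat{J}_Ae_j\bal\hat{J}_Bf_k=-e_j\bal f_k
\]
by Eq.~(\ref{eq:double_J_SM}). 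So the quotient is spanned by $2d_Ad_B$ vectors. Combined with surjectivity, $\Phi$ is an isomorphism, and these classes map to the real basis $\{e_j\otimes f_k,\ i\,e_j\otimes f_k\}$.

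\textbf{Main obstacle.} The delicate point is Step 4 in infinite dimensions, where the dimension count fails. There I would instead give $V_A\bal V_B$ a complex structure $\hat{J}_A\bal\id$, which is well defined because $\hat{J}_A$ preserves $N_{AB}$. I would then show that the quotient satisfies the universal property of $\otimes_{\C}$: a complex-bilinear map is real bilinear and vanishes on $N_{AB}$. This yields an inverse map explicitly. Completion issues for Hilbert spaces would be handled by restricting to the algebraic tensor product and extending by continuity.
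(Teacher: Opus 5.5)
Your proof is correct and follows the paper's argument step for step: the real-bilinear map $(v,w)\mapsto v\otimes_{\C}w$ descends through $N_{AB}$, it is surjective because elementary tensors span, and a basis count leaves the $2d_Ad_B$ classes $e_j\bal f_k$ and $\hat{J}_Ae_j\bal f_k$. Your version is slightly tighter than the paper's, which just asserts $\dim_{\R}(V_A\bal V_B)=2d_Ad_B$; you instead take only the upper bound from the spanning set and let surjectivity supply the lower bound. (Your infinite-dimensional remark goes beyond the paper, which works with $H\simeq\C^d$ throughout.)
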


\begin{proof}---The map $(v,w)\mapsto v\otimes_{\C}w$ is real bilinear and satisfies
\begin{eqnarray}
\Phi(\hat{J}_Av\otimes w-v\otimes \hat{J}_Bw)=iv\otimes_{\C}w-v\otimes_{\C}iw=0.
\end{eqnarray}
Hence it descends to the quotient. Its image contains all simple complex tensors and thus spans $H_A\otimes_{\C}H_B$ over $\C$ and over $\R$. Choosing complex bases $\{e_a\}$ and $\{f_b\}$, each pair $(a,b)$ contributes the two real classes $e_a\bal f_b$ and $\hat{J}_Ae_a\bal f_b$, so
\begin{eqnarray}
\dim_{\R}(V_A\bal V_B)=2\dim_{\C}H_A\dim_{\C}H_B=\dim_{\R}(H_A\otimes_{\C}H_B)_{\R}.
\end{eqnarray}
A surjective real-linear map between vector spaces of the same finite dimension is an isomorphism.
\end{proof}

For two flag systems, the balanced relation is equivalent to the kernel relations in Ref.~\cite{BarriosHita2026}:
\begin{eqnarray}
\ket{00}_{F_AF_B}+\ket{11}_{F_AF_B}\sim0,
\quad
\ket{01}_{F_AF_B}-\ket{10}_{F_AF_B}\sim0.
\label{eq:kernel_flags_SM}
\end{eqnarray}
Indeed, set $v=\ket{0}_{F_A}$ and $w=\ket{0}_{F_B}$ in $\hat{J}_Av\otimes w-v\otimes \hat{J}_Bw\sim0$ to obtain $\ket{10}-\ket{01}\sim0$. Set $w=\ket{1}_{F_B}$ to obtain $\ket{11}+\ket{00}\sim0$.

\begin{proposition}[P4 does not determine the quotient]
\label{prop:P4_SM}
The locality postulate that local operations act trivially on remote systems is satisfied both by the ordinary real tensor product and by the balanced quotient tensor product. Therefore it does not derive the quotient rule.
\end{proposition}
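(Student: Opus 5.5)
The plan is to check the locality postulate separately on both composition rules. The postulate says that a local operation on $A$ acts trivially on $B$, and symmetrically. For the ordinary real tensor product $V_A\otimes_{\R}V_B$ this is standard. I then show that the quotient $V_A\bal V_B=(V_A\otimes_{\R}V_B)/N_{AB}$ also carries a well-defined action of local operations, and that this action leaves the remote factor untouched. Since one postulate is satisfied by two inequivalent products, it cannot single out the quotient rule.

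For the ordinary product, a local real operation $\hat{A}$ on $V_A$ is represented by $\hat{A}\otimes\id_B$. This acts as $v\otimes w\mapsto \hat{A}v\otimes w$. It therefore commutes with every $\id_A\otimes\hat{B}$ and acts as the identity on $B$. Reduced statistics of $B$ are unchanged whenever $\hat{A}$ is a trace-preserving channel, by the usual partial-trace argument over $\R$.

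For the balanced product, the key step, and the only real obstacle, is well-definedness. I must show that a local operation maps $N_{AB}$ into itself, so that it descends to the quotient. I would restrict attention to the physical local operations of the protected branch, namely those commuting with $\hat{J}_A$; this is the class fixed by Proposition~\ref{prop:complex_linear_SM} and Theorem~\ref{thm:J_super_SM}. For such $\hat{A}$,
\begin{eqnarray}
(\hat{A}\otimes\id)(\hat{J}_Av\otimes w-v\otimes\hat{J}_Bw)=\hat{J}_A(\hat{A}v)\otimes w-(\hat{A}v)\otimes\hat{J}_Bw\in N_{AB}.
\end{eqnarray}
Hence $\hat{A}\bal\id_B$ is well defined, and the same holds for $\id_A\bal\hat{B}$. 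These two operators commute on the quotient, and $\hat{A}\bal\id_B$ fixes the $B$ label in $v\bal w$.

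To make ``acts trivially on remote systems'' concrete, I would transport everything through $\Phi$ from Theorem~\ref{thm:balanced_iso_SM}. Under this map, $\hat{A}\bal\id_B$ becomes $\hat{A}\otimes_{\C}\id_B$. Its reduced action on $B$ is then trivial by ordinary complex partial-trace no-signaling.

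The conclusion is that both constructions satisfy the postulate. They are nonetheless inequivalent: their real dimensions differ, $4d_Ad_B$ versus $2d_Ad_B$. So the postulate does not derive the quotient rule, and the choice of product requires extra input, namely the hidden $\hat{J}$.
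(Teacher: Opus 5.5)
Your proof is correct and follows the same route as the paper: you use $\hat{A}\otimes\id$ and $\id\otimes\hat{B}$ on $V_A\otimes_{\R}V_B$, and $v\bal w\mapsto\hat{A}v\bal w$, $v\bal w\mapsto v\bal\hat{B}w$ on the balanced product for $\hat{J}$-commuting local operators, and then verify the local-triviality and commutation relations in both cases. Your additions fill in points the paper leaves implicit: the explicit check that $\hat{A}\otimes\id$ maps $N_{AB}$ into itself, so the quotient action is well defined, and the dimension count $4d_Ad_B\neq 2d_Ad_B$ that shows the two products are inequivalent.
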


\begin{proof}---For the ordinary real tensor product choose
\begin{eqnarray}
\xi^{\otimes}_{\R}(v,w)=v\otimes_{\R}w,
\quad
\eta^{1,\otimes}_{\R}(\hat{A})=\hat{A}\otimes\id,
\quad
\eta^{2,\otimes}_{\R}(\hat{B})=\id\otimes \hat{B}.
\end{eqnarray}
Then $\eta^{1,\otimes}_{\R}(\hat{A})\xi^{\otimes}_{\R}(v,w)=\xi^{\otimes}_{\R}(\hat{A}v,w)$, similarly for subsystem $2$, and the two local embeddings commute.

For the quotient choose
\begin{eqnarray}
\xi^{\bal}_{\R}(v,w)=v\bal w,
\quad
\eta^{1,\bal}_{\R}(\hat{A})(v\bal w)=\hat{A}v\bal w,
\quad
\eta^{2,\bal}_{\R}(\hat{B})(v\bal w)=v\bal \hat{B}w,
\end{eqnarray}
for $\hat{J}$-compatible local operators. These maps again satisfy the same local-triviality equations. Since two inequivalent composition rules satisfy the same local condition, the condition cannot select one of them.
\end{proof}

%=====================================================================================================================
\section{Formal proof of the phase-flag dilemma}\label{sec:dilemma_proof}
%=====================================================================================================================

\begin{definition}[Protected branch]
A real flag theory is in the protected branch if all physical effects satisfy $[\hat{E},\hat{J}]=0$, all instruments descend to the orbit quotient $v\sim e^{\alpha \hat{J}}v$, and composites are formed by the balanced tensor product $\bal$ over the chosen complex structures.
\end{definition}

\begin{definition}[Access branch]
A real flag theory is in the access branch if some physical effect, instrument, or dynamical coupling does not commute with the relevant $\hat{J}$ and therefore can distinguish, control, or decohere points on the orbit $v\sim e^{\alpha \hat{J}}v$.
\end{definition}

\begin{theorem}[Phase-flag dilemma]
\label{thm:dilemma_SM}
A quotient-space real formulation with the flag map $\hat{S}$ has the following alternatives.
\begin{enumerate}
\item In the protected branch, the physical algebra and composition are isomorphic to those of complex quantum mechanics. The flag is a gauge coordinate, not an ordinary subsystem.
\item In the access branch, either probabilities are not well defined on complex rays, or the phase orbit is split into physically distinct real states. In the latter case the theory is not empirically equivalent to complex quantum mechanics.
\end{enumerate}
\end{theorem}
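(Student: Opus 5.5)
The proof splits into the two branches, which are exhaustive: either every physical effect, instrument and coupling commutes with $\hat{J}$ (protected), or at least one does not (access).

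\emph{Protected branch.} Effects lie in $\Sym(V)\cap\{\hat{J}\}'$ by Theorem~\ref{thm:J_super_SM}. By Proposition~\ref{prop:complex_linear_SM}, every $\hat{J}$-commuting real operator has the block form $\begin{pmatrix}X&-Y\\Y&X\end{pmatrix}$, i.e.\ it is the realification of a unique complex operator $X+iY$. Symmetry of the block matrix means $X^T=X$ and $Y^T=-Y$, which is exactly Hermiticity of $X+iY$. So the map $\hat{T}$ of Eq.~(\ref{eq:T_SM}) is a bijection from Hermitian operators onto $\mathcal A_{\phys}$. It preserves products because realification is a real-algebra homomorphism on complex-linear maps. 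It preserves positivity and expectation values because $v^T\hat{T}(\hat{A})v=\operatorname{Re}\langle\psi|\hat{A}|\psi\rangle=\langle\psi|\hat{A}|\psi\rangle$, using $(v,w)_{\R}=\operatorname{Re}\braket{v}{w}$.

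Instruments satisfying the covariance condition~(\ref{eq:instrument_cov_SM}) then descend to maps on orbits. Their Kraus operators can be taken to commute with $\hat{J}$, so they are realifications of complex Kraus operators. This Kraus step is the one to state carefully; if it is delicate, I would take instruments to be given by $\hat{J}$-commuting Kraus operators as part of the definition of the protected branch. For composites, Theorem~\ref{thm:balanced_iso_SM} identifies $V_A\bal V_B$ with $(H_A\otimes_{\C}H_B)_{\R}$, and this identification intertwines the local actions of Proposition~\ref{prop:P4_SM}. Hence the whole operational structure is isomorphic to CQM. The flag is not a subsystem in this branch, because the flag-only effects $\hat{E}_0,\hat{E}_1$ are excluded by Corollary~\ref{cor:flag_only_SM}.

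\emph{Access branch.} Suppose some effect $\hat{E}$ is physical but $[\hat{E},\hat{J}]\neq0$. By Theorem~\ref{thm:J_super_SM}, condition (i) fails, so there exist $v$ and $\alpha$ with $v^T\hat{E}v\neq (e^{\alpha\hat{J}}v)^T\hat{E}(e^{\alpha\hat{J}}v)$. There are now two cases.
\begin{enumerate}
\item[(1)] If $v$ and $e^{\alpha\hat{J}}v$ are still declared to be one state, the outcome probability is not a function of the orbit, so probabilities are not well defined on complex rays.
\item[(2)] If they are declared distinct, the orbit splits. These two states are the same ray in CQM, and no CQM effect separates them, since Born probabilities are phase invariant. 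Yet $\hat{E}$ gives them different statistics, so the theory is not empirically equivalent to CQM; Theorem~\ref{thm:postulate_incompat_SM} gives this conclusion directly for $\hat{E}_0$.
\end{enumerate}

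If the access is instead through an instrument or a dynamical coupling, I would reduce it to the effect case. The argument is that if the posterior states are not covariant, there is a later $\hat{J}$-commuting effect whose statistics on the posteriors differ between representatives. The main obstacle is this reduction, because a non-covariant instrument can be non-covariant only up to a phase that no protected effect sees. My plan for it is to compose the instrument with a non-covariant posterior, or with its dual, and apply the polarization argument of Theorem~\ref{thm:J_super_SM}. Where that fails, I would note that such an instrument already splits orbits at the level of posterior states, which places it in case (2) by definition.
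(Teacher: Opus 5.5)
Your proposal is correct and follows the paper's proof essentially step for step. You use Theorem~\ref{thm:J_super_SM}, Proposition~\ref{prop:complex_linear_SM} and Theorem~\ref{thm:balanced_iso_SM} for the protected branch, and you split the access branch into keeping versus dropping the orbit identification, as the paper does. In the split-orbit case you are slightly sharper than the paper, which invokes $\hat{E}_0$ even though the access branch only guarantees some $\hat{J}$-noncommuting $\hat{E}$.

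One side claim is false, so your fallback is necessary rather than optional. Covariance~(\ref{eq:instrument_cov_SM}) does not in general allow $\hat{J}$-commuting Kraus operators. Take $C(x,y)=(x,-y)$, so that $C\hat{J}=-\hat{J}C$. The channel $\hat{\rho}\mapsto\frac{1}{2}(C\hat{\rho}C^T+\hat{J}C\hat{\rho}C^T\hat{J}^T)$ is covariant, yet its whole Kraus span anticommutes with $\hat{J}$. It acts as the transpose on complex density matrices, which is not a CQM operation. This does not damage the theorem as stated: like the paper's proof, it only concerns the effect algebra and composition. The paper likewise never attempts your instrument-to-effect reduction in the access branch.
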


\begin{proof}---If complex rays are to remain physical states, Eq.~(\ref{eq:orbit_SM}) is an equivalence relation. By Theorem~\ref{thm:J_super_SM}, every physical effect must commute with $\hat{J}$. The same descent condition for instruments gives Eq.~(\ref{eq:instrument_cov_SM}), and Theorem~\ref{thm:balanced_iso_SM} shows that composition must be balanced over $\hat{J}$ in order to reproduce complex tensor products. These conditions define exactly the protected branch. Proposition~\ref{prop:complex_linear_SM} and Theorem~\ref{thm:balanced_iso_SM} then give an isomorphism to the realification of complex quantum mechanics.

If a $\hat{J}$-noncommuting operation is allowed while one still identifies $v\sim e^{\alpha \hat{J}}v$, then Theorem~\ref{thm:J_super_SM} implies that probabilities depend on which representative is chosen, so the theory is not well defined on rays. If one instead abandons the identification, then $v$ and $e^{\alpha \hat{J}}v$ become distinct physical states. In particular, Corollary~\ref{cor:flag_only_SM} shows that $\hat{E}_0$ distinguishes $v$ from $\hat{J}v$, i.e., it makes complex global phase observable. This contradicts empirical equivalence to complex quantum mechanics.
\end{proof}

\begin{remark}
The theorem is deliberately not phrased as ``the PRL construction violates no-signaling.'' The PRL construction, read in the protected branch, does not. The point is that its empirical equivalence relies on a hidden $\hat{J}$-superselection rule. The no-go protocols below concern the access branch, especially the strengthened case where flag access becomes a generic phase-reference primitive.
\end{remark}

%=====================================================================================================================
\section{Phase-access RNQM as an OD resource}\label{sec:phase_access_OD}
%=====================================================================================================================

In projective quantum mechanics the input data of a pure state is a ray $\hat{P}_\psi=\ketbra{\psi}{\psi}$, not a vector representative. A coherent expression such as $\alpha\ket{\psi}+\beta\ket{\phi}$ is not a function of the two rays unless a relative phase convention has been supplied.

\begin{definition}[Operational phase convention and OD]
Let $\mathcal R\subseteq\mathbb P(H)$ be a set of rays. A phase convention on $\mathcal R$ is a map $\Gamma$ assigning to each ray $\hat{P}_\psi\in\mathcal R$ a normalized representative $\ket{\psi_\Gamma}$ with $\hat{P}_\psi=\ketbra{\psi_\Gamma}{\psi_\Gamma}$. The domain is overlap-determinable (OD) in a physical scenario if that scenario supplies such a convention operationally, so that $\braket{\psi_\Gamma}{\phi_\Gamma}$ is a definite complex datum for all relevant rays.
\end{definition}

\subsection{Real representatives and convention-fixed overlaps}
\label{subsec:real_overlap_SM}

Let $v=(x,y)$ and $w=(u,z)$ be the real representatives of $\psi=x+i y$ and $\phi=u+i z$. With the physics convention that $\braket{\psi}{\phi}$ is linear in the second slot,
\begin{eqnarray}
\braket{\psi}{\phi}=x\cdot u+y\cdot z+i(x\cdot z-y\cdot u).
\label{eq:complex_overlap_components_SM}
\end{eqnarray}
The corresponding real quantities are
\begin{eqnarray}
v^T w=x\cdot u+y\cdot z,
\quad
v^T \hat{J}w=-x\cdot z+y\cdot u.
\label{eq:real_overlap_components_SM}
\end{eqnarray}
Thus
\begin{eqnarray}
\braket{\psi}{\phi}=v^T w-i v^T \hat{J}w.
\label{eq:overlap_formula_SM}
\end{eqnarray}
Equation~(\ref{eq:overlap_formula_SM}) is only a change of notation in the protected branch, because no operation selects a representative on the orbit $v\sim e^{\alpha \hat{J}}v$. It becomes an operational resource only when the flag can be read or coherently aligned for unknown inputs.

A $\hat{J}$-noncommuting flag effect is an operation that distinguishes different points on the orbit $\hat{S}\ket{\psi}\mapsto e^{\alpha \hat{J}}\hat{S}\ket{\psi}$. If this access is available uniformly, it can be used to calibrate a representative of each input ray relative to a fixed flag direction. In that sense, the access branch supplies the phase convention that standard quantum mechanics lacks. This is why the protocols below are stated for a strengthened access theory.

\begin{definition}[Strong phase-access RNQM]
\label{def:strong_access}
A strong phase-access RNQM is an access-branch real theory with the following operational primitive. For arbitrary unknown pure inputs from a tested domain, the flag sector can be read and coherently controlled so as to implement convention-fixed interference operations. Equivalently, the theory supplies an OD convention on that domain.
\end{definition}

The adjective ``strong'' is important. A single noncommuting noisy effect does not automatically implement all the protocols below. The protocols characterize the consequence of promoting the flag angle to a generic operational resource.

\begin{theorem}[Strong flag access induces OD]
\label{thm:flag_access_OD_SM}
Strong phase-access RNQM on a ray domain $\mathcal R$ induces an operational phase convention on $\mathcal R$. Hence $\mathcal R$ is overlap-determinable in the corresponding scenario.
\end{theorem}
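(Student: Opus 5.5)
The proof is essentially definitional, so the task is to unpack Definition~\ref{def:strong_access} into an explicit map $\Gamma_F:\mathcal R\to H$ and then check the two clauses of the OD definition. First I construct the convention. Fix once and for all a reference flag direction, say $\ket{0}_F$, together with a reference real vector $r\in V$ that is not orthogonal to the tested domain. Strong access means that for any unknown input $[v]=\{e^{\alpha\hat{J}}v\}$ the flag angle relative to this reference can be read and coherently undone. Concretely, the angle is $\theta(v)=\arg\bigl(r^T v - i\, r^T\hat{J}v\bigr)$, which by Eq.~(\ref{eq:overlap_formula_SM}) is the phase of $\braket{r}{\psi}$. Applying $e^{-\theta(v)\hat{J}}$ then outputs $v_F:=e^{-\theta(v)\hat{J}}v$. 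I would then define $\ket{\psi_{\Gamma_F}}:=\hat{S}^{-1}(v_F)/\norm{v_F}$.

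Next I verify that $\Gamma_F$ is a phase convention in the sense of the definition. Two facts are needed. First, $v_F$ must depend only on the orbit $[v]$. This holds because replacing $v$ by $e^{\beta\hat{J}}v$ shifts $\theta$ by $\beta$, using $r^T\hat{J}e^{\beta\hat{J}}=r^T e^{\beta\hat{J}}\hat{J}$ and Eq.~(\ref{eq:phase_orbit_SM}). The compensating rotation therefore lands on the same point, so $\Gamma_F$ is a genuine function on rays. Second, $\ketbra{\psi_{\Gamma_F}}{\psi_{\Gamma_F}}=\hat{P}_\psi$. This holds because $v_F$ lies on the orbit of $v$, and the $SO(2)$ orbit is exactly the global-phase orbit.

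Then OD follows. For any two rays in $\mathcal R$, Eq.~(\ref{eq:overlap_formula_SM}) applied to the fixed representatives gives $\braket{\psi_{\Gamma_F}}{\phi_{\Gamma_F}}=v_F^T w_F-i\,v_F^T\hat{J}w_F$. Both terms are real numbers computed from the uniquely fixed vectors $v_F,w_F$, so the overlap is a definite complex datum. This is exactly the OD condition. By Definition~\ref{def:strong_access}, the convention is supplied operationally by the flag primitive and not by side information, which is the "in the corresponding scenario" clause.

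I expect the main obstacle to be the gauge-fixing degeneracy. Inputs with $\braket{r}{\psi}=0$ have undefined $\theta$. To handle them I would either cover $\mathcal R$ by finitely many reference vectors $r_1,\dots,r_k$ spanning $H$ and use the first one with nonzero overlap, which is a measurable but discontinuous choice that the definition of a phase convention permits, or simply invoke the "Equivalently" clause of Definition~\ref{def:strong_access}. The latter makes the theorem immediate and relegates the construction above to an illustration of why reading and aligning the flag amounts to supplying $\Gamma_F$.
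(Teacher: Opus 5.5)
Your proposal is correct and follows the same route as the paper's proof. It aligns the flag to a fixed reference to select one representative on each $SO(2)$ orbit, observes that this representative lies on the orbit (so the assignment is a phase convention), and reads off definite overlaps from Eq.~(\ref{eq:overlap_formula_SM}); your explicit gauge fixing via the phase of $\braket{r}{\psi}$, the check that it is independent of the representative, and your handling of rays with $\braket{r}{\psi}=0$ simply make precise what the paper leaves implicit when it says the residual ambiguity is ``fixed by the same reference.''
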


\begin{proof}---For each ray $\hat{P}\in\mathcal R$, choose any real representative $v$ on its flag orbit. Strong phase access aligns the flag to a fixed reference direction, producing a representative $v_{\Gamma_F}$ whose residual ambiguity is common to the task and therefore fixed by the same reference. Let $\ket{\psi_{\Gamma_F}}$ be the complex vector represented by $v_{\Gamma_F}$. Then $\hat{P}=\ketbra{\psi_{\Gamma_F}}{\psi_{\Gamma_F}}$, so $\Gamma_F:\hat{P}\mapsto\ket{\psi_{\Gamma_F}}$ is a phase convention. For any two rays in the domain, Eq.~(\ref{eq:overlap_formula_SM}) gives a definite complex overlap between the selected representatives. Therefore the domain is OD.
\end{proof}

\begin{corollary}[No-superposition clash]
If a domain is not OD in ordinary CQM, then a universal superposition operation on that domain derived from strong flag access is not an ordinary quantum operation; it is an additional post-quantum primitive.
\end{corollary}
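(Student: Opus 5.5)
The plan is to derive the corollary directly from Theorem~\ref{thm:flag_access_OD_SM} together with the OD theorem of Ref.~\cite{BangChoYee2026OD}, arguing by contradiction. Fix a ray domain $\mathcal R$ that is not OD in ordinary CQM, meaning the CQM scenario supplies no operational phase convention on $\mathcal R$. Suppose a strong phase-access theory provides a universal superposition operation on $\mathcal R$. This would be a nonzero map $\mathcal M$ which, on success, sends unknown inputs $\hat{P}_\psi,\hat{P}_\phi\in\mathcal R$ to (a state proportional to) $a\ket{\psi_{\Gamma_F}}+b\ket{\phi_{\Gamma_F}}$, with $ab\neq0$. Here $\Gamma_F$ is the convention induced by the flag, as in Theorem~\ref{thm:flag_access_OD_SM}. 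The goal is to show that $\mathcal M$ cannot coincide with any completely positive trace-nonincreasing map of standard CQM acting on the same ray inputs.

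The first step is to check that the target really depends on $\Gamma_F$ and not merely on the rays. I will reuse the mechanism of Proposition~\ref{prop:naive_superposition}. Replacing $\ket{\phi_{\Gamma_F}}$ by $e^{i\theta}\ket{\phi_{\Gamma_F}}$ changes the output ray whenever $ab\neq0$ and the two representatives are linearly independent, and a generic domain contains such pairs. So the output of $\mathcal M$ is a function of (ray, representative) data, and the representative data is precisely what $\Gamma_F$ supplies.

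The second step is the contradiction itself. Suppose $\mathcal M$ were an ordinary quantum operation, i.e. a CP trace-nonincreasing map on density operators of $H$ (or $H\otimes H$), possibly using ancillas and side information available in the CQM scenario. Then the map would produce the prescribed coherent superposition on $\mathcal R$. By the ``only if'' direction of the OD theorem, $\mathcal R$ would then have to be OD in that CQM scenario, which contradicts the hypothesis. Hence $\mathcal M$ lies outside the set of CQM operations available in the scenario. Since $\mathcal M$ is realized in the strong-access theory, it is an additional primitive, namely the one supplied by $\Gamma_F$.

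The main obstacle is making the phrase ``the same scenario'' precise. I need the CQM scenario used to invoke the OD theorem to include every resource available to the access theory except the flag convention; otherwise one could object that $\Gamma_F$ is secretly ordinary side information. I would handle this by adopting the convention of Theorem~\ref{thm:flag_OD}: the clause ``unless already OD by ordinary side information'' is absorbed into the hypothesis that $\mathcal R$ is non-OD. With that convention, any phase convention that exists must come from the flag, and the conclusion follows.
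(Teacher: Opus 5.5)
Your proposal is correct and matches the paper's proof: the corollary is the contrapositive of the OD theorem (a nonzero CP trace-nonincreasing universal superposer forces the domain to be OD), with Theorem~\ref{thm:flag_access_OD_SM} identifying the flag as the source of the convention. Your representative-dependence check and your handling of ``the same scenario'' are extra framing beyond the paper's one-line argument, but they leave the logic unchanged.
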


\begin{proof}---This is the contrapositive of the OD theorem of Ref.~\cite{BangChoYee2026OD}. A nonzero completely positive trace-nonincreasing universal superposer requires OD. If ordinary CQM supplies no OD convention but the flag-access theory does, then the flag has added a new operational resource.
\end{proof}

This distinction is the logical firewall of the paper. The protected branch of Ref.~\cite{BarriosHita2026}, read literally as a quotient/gauge theory, is not claimed to implement the protocols below. The protocols are counterfactual stress tests of the access branch: they show what becomes possible if the same flag that stores the hidden complex phase is promoted from a gauge coordinate to a coherent physical reference that works on arbitrary unknown inputs.

%=====================================================================================================================
\section{Protocol I: probabilistic cloning from phase access}\label{sec:cloning}
%=====================================================================================================================

The standard Duan-Guo theorem states that a finite set of pure states can be probabilistically cloned if and only if the vectors are linearly independent~\cite{DuanGuo1998}. Strong phase access violates this boundary by enabling a superposition primitive on unknown rays. We give an explicit three-state protocol.

\subsection{Primitive}

Assume a phase-access superposer $\hat{\mathsf S}_{\alpha,\beta}$ with nonzero $\alpha,\beta$ that, upon success, maps
\begin{eqnarray}
\ket{\psi}\ket{\phi}\longmapsto
\frac{\alpha\ket{\psi}+\beta\ket{\phi}}{\norm{\alpha\ket{\psi}+\beta\ket{\phi}}}
\label{eq:superposer_primitive_SM}
\end{eqnarray}
in the OD convention supplied by the accessible flag. This is precisely the kind of coherent operation forbidden for generic unknown rays in standard quantum mechanics~\cite{Oszmaniec2016,Bandyopadhyay2020,BangChoYee2026OD}.

\subsection{Dependent inputs and independent outputs}

Let $H=\Span\{\ket{0},\ket{1},\ket{2}\}$. Define the linearly dependent set
\begin{eqnarray}
\ket{\psi_1}=\ket{0},
\quad
\ket{\psi_2}=\ket{1},
\quad
\ket{\psi_3}=\frac{\ket{0}-\ket{1}}{\sqrt2},
\label{eq:dependent_set_SM}
\end{eqnarray}
and let the fixed auxiliary state be
\begin{eqnarray}
\ket{\phi}=\ket{2}.
\end{eqnarray}
The three states in Eq.~(\ref{eq:dependent_set_SM}) lie in a two-dimensional subspace, hence are linearly dependent.

On input $\ket{\psi_j}\ket{\phi}$, the primitive outputs
\begin{eqnarray}
\ket{\Psi_j}=\alpha\ket{\psi_j}+\beta\ket{2}
\label{eq:Psi_j_SM}
\end{eqnarray}
up to normalization. The coefficient matrix of $\{\ket{\Psi_j}\}$ in the basis $\{\ket{0},\ket{1},\ket{2}\}$ is
\begin{eqnarray}
M=\begin{pmatrix}
\alpha&0&\alpha/\sqrt2\\
0&\alpha&-\alpha/\sqrt2\\
\beta&\beta&\beta
\end{pmatrix}.
\label{eq:matrix_ind_SM}
\end{eqnarray}
Its determinant is
\begin{eqnarray}
\det M=\alpha^2\beta.
\label{eq:det_SM}
\end{eqnarray}
Thus $\{\ket{\Psi_1},\ket{\Psi_2},\ket{\Psi_3}\}$ is linearly independent whenever $\alpha\beta\neq0$.

%=====================================================================================================================
\subsection{Unambiguous discrimination and cloning}

A finite family of pure states is unambiguously distinguishable with nonzero success probability if and only if it is linearly independent~\cite{Chefles1998}. Therefore there exists a POVM $\{\hat{D}_1,\hat{D}_2,\hat{D}_3,\hat{D}_?\}$ such that
\begin{eqnarray}
\bra{\Psi_j}\hat{D}_i\ket{\Psi_j}=0\quad (i\neq j),
\quad
\bra{\Psi_j}\hat{D}_j\ket{\Psi_j}>0.
\label{eq:USD_SM}
\end{eqnarray}
The probabilistic cloning protocol is now:

\begin{center}
\fbox{\begin{minipage}{0.93\textwidth}
\textbf{Protocol 1: Probabilistic cloning of a linearly dependent set in strong phase-access RNQM.}
\begin{algorithmic}[1]
\Require One unknown input promised to be one of $\{\ket{\psi_1},\ket{\psi_2},\ket{\psi_3}\}$.
\State Append the fixed auxiliary state $\ket{\phi}=\ket{2}$.
\State Apply the phase-access superposer $\hat{\mathsf S}_{\alpha,\beta}$.
\State On success, perform the USD POVM $\{\hat{D}_1,\hat{D}_2,\hat{D}_3,\hat{D}_?\}$ on $\ket{\Psi_j}$.
\If{the conclusive outcome is $j$}
\State Prepare $\ket{\psi_j}\ket{\psi_j}$.
\Else
\State Declare failure.
\EndIf
\end{algorithmic}
\end{minipage}}
\end{center}

The physical meaning of this protocol is simple. The phase-access superposer does not merely append an auxiliary state; it creates a coherent embedding whose off-diagonal terms depend on a fixed phase convention. Ordinary quantum mechanics forbids such a universal embedding for unknown rays because it would convert projective input data into vector-representative data. Strong phase access assumes exactly this missing conversion.

For input $\ket{\psi_j}$, the total success probability is the product of the nonzero superposer success probability and the nonzero USD success probability for $\ket{\Psi_j}$. Hence the protocol probabilistically clones a linearly dependent set, which is impossible in ordinary quantum mechanics.

%=====================================================================================================================
\section{Protocol II: signaling from phase access}\label{sec:signaling}
%=====================================================================================================================

We give two signaling constructions. The first is a direct flag-phase signaling protocol tailored to the phase-access branch. The second is the standard steering-to-forbidden-cloning route. This is consistent with the broader equivalence between signaling resources and super-Grover resources in several post-quantum models~\cite{Bao2016}.

%=====================================================================================================================
\subsection{Direct flag-phase signaling by Kraus-phase choices}\label{subsec:direct_signal}

This protocol shows why a physical global phase is incompatible with no-signaling in the presence of ordinary entanglement and local measurement phases.

Alice and Bob share the Bell state
\begin{eqnarray}
\ket{\Phi^+}_{AB}=\frac{\ket{00}+\ket{11}}{\sqrt2}.
\label{eq:bell_signal_SM}
\end{eqnarray}
Alice measures in the $X$ basis $\{\ket{+},\ket{-}\}$, but she chooses one of two phase-equivalent Kraus implementations indexed by a bit $b\in\{0,1\}$:
\begin{eqnarray}
\hat{K}^{(b)}_{\pm}=e^{-ib\pi/2}\ket{m_{\pm}}\bra{\pm}.
\label{eq:kraus_phase_SM}
\end{eqnarray}
The effects are independent of $b$:
\begin{eqnarray}
\hat{K}^{(b)\dagger}_{\pm}\hat{K}^{(b)}_{\pm}=\ketbra{\pm}{\pm}.
\end{eqnarray}
In complex quantum mechanics these are the same local measurement instrument up to a phase of each Kraus branch; Bob's reduced density matrix is $\id_2/2$ for both choices.

However, the unnormalized conditional vector on Bob's side acquires the corresponding global phase:
\begin{eqnarray}
\bra{\pm}\Phi^+\rangle=\frac{1}{\sqrt2}\ket{\pm},
\quad
\hat{K}^{(b)}_{\pm}~\Rightarrow~e^{ib\pi/2}\ket{\pm}_B.
\label{eq:bob_phase_SM}
\end{eqnarray}
In the protected branch these two conditional states represent the same ray. In the access branch they are distinct flag states. Since $\ket{\pm}$ are real vectors,
\begin{eqnarray}
\hat{S}\ket{\pm}=\ket{\pm}\ket{0}_F,
\quad
\hat{S}(i\ket{\pm})=\ket{\pm}\ket{1}_F.
\label{eq:real_pm_SM}
\end{eqnarray}
Thus Bob's unconditional real ensemble is
\begin{eqnarray}
\hat{\rho}_B^{(0)}=\frac{\id_2}{2}\otimes\ketbra{0}{0}_F, 
\quad
\hat{\rho}_B^{(1)}=\frac{\id_2}{2}\otimes\ketbra{1}{1}_F.
\label{eq:rho_b_signal_SM}
\end{eqnarray}
If Bob is allowed to measure $\hat{E}_0=\id_2\otimes\ketbra{0}{0}_F$, then
\begin{eqnarray}
\Tr(\hat{E}_0\hat{\rho}_B^{(0)})=1,
\quad
\Tr(\hat{E}_0\hat{\rho}_B^{(1)})=0.
\label{eq:signal_prob_SM}
\end{eqnarray}
The operational steps are summarized as follows:
\begin{center}
\fbox{\begin{minipage}{0.93\textwidth}
\textbf{Protocol 2: Direct flag-phase signaling in the access branch.}
\begin{algorithmic}[1]
\Require A shared Bell pair $\ket{\Phi^+}_{AB}$ and access to Bob's flag effect $\hat{E}_0$.
\State Alice chooses a bit $b\in\{0,1\}$.
\State Alice implements the phase-equivalent $X$-basis instrument with Kraus phases $\hat{K}^{(b)}_\pm$.
\State Alice does not communicate the outcome $\pm$.
\State Bob measures $\hat{E}_0=\id_2\otimes\ketbra{0}{0}_F$.
\State The statistics distinguish $b=0$ from $b=1$ by Eq.~(\ref{eq:signal_prob_SM}).
\end{algorithmic}
\end{minipage}}
\end{center}
Bob learns Alice's bit without classical communication. Therefore no-signaling forces the branch phases in Eq.~(\ref{eq:kraus_phase_SM}) to remain gauge, which is exactly the $\hat{J}$-superselection rule.

%=====================================================================================================================
\subsection{Steering plus forbidden cloning}\label{subsec:steering_signal}

Protocol I also leads to signaling by the standard HJW steering argument~\cite{HJW1993}. Suppose Bob has a phase-access device that probabilistically clones a linearly dependent set $S=\{\ket{\psi_j}\}$ with success probabilities $p_j>0$. Let
\begin{eqnarray}
\hat{\rho}_B=\sum_j q_j\ketbra{\psi_j}{\psi_j}
\label{eq:rho_ensemble_S_SM}
\end{eqnarray}
be a mixed state with a decomposition into states from $S$. The HJW theorem says that any ensemble decomposition of $\hat{\rho}_B$ can be prepared on Bob's side by an appropriate measurement on a purifying system held by Alice.

Alice encodes a bit by choosing between two measurements that steer Bob to two decompositions of the same $\hat{\rho}_B$:
\begin{eqnarray}
\mathcal E_0=\{q_j,\ket{\psi_j}\},
\quad
\mathcal E_1=\{r_k,\ket{\phi_k}\}.
\label{eq:two_ensembles_SM}
\end{eqnarray}
Bob applies the forbidden cloner and records its success flag. For Alice's first choice the success probability is
\begin{eqnarray}
P_{\mathrm{succ}|0}=\sum_j q_jp_j>0.
\label{eq:Psucc0_SM}
\end{eqnarray}
For a generic second decomposition the average success probability $P_{\mathrm{succ}|1}=\sum_k r_k p(\phi_k)$ differs from Eq.~(\ref{eq:Psucc0_SM}). If the success functional were the same for every decomposition of every density matrix, it would be affine in $\hat{\rho}$ and would not support the nonlinear state-dependent cloning action on the dependent set. Thus a valid choice of $\mathcal E_1$ with $P_{\mathrm{succ}|1}\neq P_{\mathrm{succ}|0}$ exists. Repeating the experiment over many shared pairs lets Bob distinguish Alice's choice with arbitrarily high confidence.

The direct flag protocol above is stronger and more transparent for the present context: once global phase is readable, even Kraus phases that are operationally invisible in complex quantum mechanics become signal-carrying.

%=====================================================================================================================
\section{Protocol III: logarithmic-query Grover search}\label{sec:grover}
%=====================================================================================================================

The OD manuscript emphasizes that access to convention-fixed overlaps enables reflections about unknown states. Standard Grover search uses $\Theta(\sqrt N)$ queries and is optimal in ordinary quantum mechanics~\cite{Grover1997,Zalka1999,BBBV1997}; unknown-state reflections are known to collapse this geometry when treated as primitives~\cite{KumarParaoanu2011,Yee2020}. In the phase-access branch, the corresponding primitive is
\begin{eqnarray}
\hat{R}_{\psi}=2\ketbra{\psi}{\psi}-\id,
\label{eq:unknown_reflection_SM}
\end{eqnarray}
where $\ket{\psi}$ is not classically specified but is fixed by the phase-access convention. This operation is not available at unit cost in ordinary quantum mechanics for an arbitrary unknown $\ket{\psi}$.

Consider unstructured search over $N=2^n$ items, whose ordinary quantum query complexity is $\Theta(\sqrt N)$~\cite{Grover1997,Zalka1999}, with a unique marked state $\ket{w}$ and phase oracle
\begin{eqnarray}
\hat{O}_f\ket{x}=(-1)^{f(x)}\ket{x}, \quad f(w)=1,
\label{eq:oracle_SM}
\end{eqnarray}
where $f(x)=0$ for $x\neq w$. Let
\begin{eqnarray}
\ket{\psi_0}=\frac{1}{\sqrt N}\sum_{x=0}^{N-1}\ket{x}.
\label{eq:uniform_SM}
\end{eqnarray}
The phase-access Grover iteration is
\begin{eqnarray}
\ket{\psi_{r+1}}=\hat{R}_{\psi_r}\hat{O}_f\ket{\psi_r}.
\label{eq:super_grover_iter_SM}
\end{eqnarray}
Each round uses one oracle query and one unknown-state reflection about the current state.

\begin{center}
\fbox{\begin{minipage}{0.93\textwidth}
\textbf{Protocol 3: Phase-access logarithmic Grover search.}
\begin{algorithmic}[1]
\Require Phase oracle $\hat{O}_f$ with a unique marked item $w$; strong phase-access reflection primitive $\hat{R}_{\psi}$.
\State Prepare $\ket{\psi_0}=N^{-1/2}\sum_x\ket{x}$.
\For{$r=0,1,\ldots, r_*-1$}
\State Query the oracle: $\ket{\eta_r}=\hat{O}_f\ket{\psi_r}$.
\State Reflect about the current unknown state: $\ket{\psi_{r+1}}=\hat{R}_{\psi_r}\ket{\eta_r}$.
\EndFor
\State Measure $\ket{\psi_{r_*}}$ in the computational basis.
\end{algorithmic}
\end{minipage}}
\end{center}

Let
\begin{eqnarray}
a_r=\braket{w}{\psi_r},
\quad
p_r=\abs{a_r}^2.
\label{eq:ar_pr_SM}
\end{eqnarray}
Since
\begin{eqnarray}
\bra{\psi_r}\hat{O}_f\ket{\psi_r}=1-2p_r,
\quad
\bra{w}\hat{O}_f\ket{\psi_r}=-a_r,
\label{eq:oracle_expect_SM}
\end{eqnarray}
we obtain
\begin{eqnarray}
a_{r+1}
&=&\bra{w}\hat{R}_{\psi_r}\hat{O}_f\ket{\psi_r}\\
&=&2\braket{w}{\psi_r}\bra{\psi_r}\hat{O}_f\ket{\psi_r}-\bra{w}\hat{O}_f\ket{\psi_r}\\
&=&2a_r(1-2p_r)+a_r\\
&=&(3-4p_r)a_r.
\label{eq:ar_recursion_SM}
\end{eqnarray}
Therefore
\begin{eqnarray}
p_{r+1}=p_r(3-4p_r)^2.
\label{eq:pr_recursion_SM}
\end{eqnarray}
As long as $p_r\le1/4$, we have $3-4p_r\ge2$, so
\begin{eqnarray}
p_{r+1}\ge4p_r.
\label{eq:four_growth_SM}
\end{eqnarray}
Since $p_0=1/N$, after
\begin{eqnarray}
r_* = \left\lceil \log_4\left(\frac{N}{4}\right)\right\rceil
=\Theta(\log N)
\label{eq:rstar_SM}
\end{eqnarray}
iterations the success probability is at least $1/4$. A constant number of repetitions boosts the success probability to a constant close to one. Thus the $\Theta(\sqrt N)$ Grover bound collapses to $O(\log N)$ oracle queries in a model where $\hat{R}_{\psi_r}$ is available at unit cost.

This is not a standard Grover iterate with a fixed diffusion operator. The reflection axis is updated to the unknown, oracle-dependent state currently held by the computer. That update is nonlinear as a transformation on the projective state trajectory: the next operation depends on the present state itself. The logarithmic growth in Eq.~(\ref{eq:four_growth_SM}) is therefore a diagnostic of the extra resource, not a hidden quantum algorithm inside ordinary unitary mechanics.

The connection to phase access is visible from the action of $\hat{R}_{\psi}$ on a test state $\ket{\phi}$:
\begin{eqnarray}
\hat{R}_{\psi}\ket{\phi}=2\braket{\psi}{\phi}\ket{\psi}-\ket{\phi}.
\label{eq:reflection_overlap_SM}
\end{eqnarray}
For an unknown or oracle-dependent $\ket{\psi}$, the complex number $\braket{\psi}{\phi}$ is a convention-fixed overlap. Thus a unit-cost unknown-state reflection is an OD resource in operational form. It is precisely what the protected branch with $[\hat{E},\hat{J}]=0$ denies.

%=====================================================================================================================
\section{Theoremized protocol consequences of strong phase access}\label{sec:theoremized_protocols}
%=====================================================================================================================

This section packages the constructive protocols of Protocols I--III as theorems. The phrase ``strong phase access'' always refers to Definition~\ref{def:strong_access}; it means that the flag does more than suffer a local imperfection. It supplies a coherent OD convention on the tested domain.

\begin{theorem}[Cloning consequence]
\label{thm:cloning_summary_SM}
In a strong phase-access RNQM with a nonzero coherent superposition primitive $\hat{\mathsf S}_{\alpha,\beta}$, $\alpha\beta\neq0$, there exists a probabilistic protocol that clones a linearly dependent set of pure states.
\end{theorem}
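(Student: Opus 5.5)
The plan is to package Protocol~1 of Sec.~\ref{sec:cloning} as an explicit witness. I will take the linearly dependent family $\{\ket{\psi_1},\ket{\psi_2},\ket{\psi_3}\}$ of Eq.~(\ref{eq:dependent_set_SM}) inside $\Span\{\ket0,\ket1\}\subset\C^3$. Linear dependence holds because $\ket{\psi_3}=(\ket{\psi_1}-\ket{\psi_2})/\sqrt2$. I will then fix the known auxiliary ray $\ket{\phi}=\ket{2}$.

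The first step uses the hypothesis. Strong phase access supplies, by Theorem~\ref{thm:flag_access_OD_SM}, an OD convention on the finite domain $\{\ket{\psi_j}\}\cup\{\ket{2}\}$. The convention should be chosen so that the representatives are exactly the vectors written in Eq.~(\ref{eq:dependent_set_SM}). This choice is legitimate because a phase convention is just a choice of representative per ray, and the flag aligns each real representative to the reference direction $\ket{0}_F$; all these vectors are real, so $\hat{S}$ places them in the $\ket{0}_F$ sector. In that convention the primitive $\hat{\mathsf S}_{\alpha,\beta}$ succeeds with some probability $s_j>0$ and outputs the normalized version of $\ket{\Psi_j}=\alpha\ket{\psi_j}+\beta\ket{2}$.

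The second step is linear algebra. The coefficient matrix is Eq.~(\ref{eq:matrix_ind_SM}), and a cofactor expansion gives $\det M=\alpha^2\beta\neq0$. Hence $\{\ket{\Psi_j}\}$ is linearly independent. Normalization is harmless since each $\norm{\Psi_j}^2=|\alpha|^2+|\beta|^2>0$, using orthogonality of $\ket{\psi_j}$ to $\ket2$.

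The third step invokes Chefles' criterion. It gives an ordinary quantum USD POVM $\{\hat D_1,\hat D_2,\hat D_3,\hat D_?\}$ satisfying Eq.~(\ref{eq:USD_SM}) with conclusive probabilities $u_j>0$. On a conclusive outcome $j$ one prepares $\ket{\psi_j}^{\otimes2}$; this is a known, classically specified state, so its preparation is standard. Because USD never errs, the protocol outputs the correct clone with probability $s_ju_j>0$ for each $j$ and never outputs a wrong clone. That is exactly the Duan--Guo notion of probabilistic cloning, and the family is linearly dependent, which completes the claim.

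The main obstacle is the first step. One must argue that the convention delivered by strong access can be taken compatible with the representatives of Eq.~(\ref{eq:dependent_set_SM}), or at least that the conclusion is robust to the residual freedom. I will handle this by noting that it does not need to be compatible. Under any fixed convention, the representatives are $e^{i\theta_j}\ket{\psi_j}$ with the $\theta_j$ fixed once and for all, and the auxiliary is $e^{i\chi}\ket2$. The matrix $M$ then has its columns rescaled by phases and the coefficient $\beta$ replaced by $\beta e^{i(\chi-\theta_j)}$ in column $j$. Its determinant stays nonzero for generic phases, and the only degenerate case can be excluded by choosing $\alpha,\beta$ appropriately. If necessary, I will verify directly that independence follows because the $\ket2$-components are all nonzero while the $\{\ket0,\ket1\}$ parts of any two columns span the plane.
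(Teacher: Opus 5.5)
Your steps 1--3 are the paper's proof: the same dependent triple, the auxiliary $\ket{2}$, $\det M=\alpha^2\beta$, Chefles' USD criterion, and repreparation with probability $s_ju_j>0$. The paper simply takes the outputs $\alpha\ket{\psi_j}+\beta\ket{2}$ as delivered in the flag-supplied convention.

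The genuine error is in your last paragraph, which you offer as the resolution of the ``main obstacle.'' Take representatives $e^{i\theta_j}\ket{\psi_j}$ and $e^{i\chi}\ket{2}$, and set $\gamma_j=\chi-\theta_j$. The determinant is then
\[
\alpha^2\beta\left[e^{i\gamma_3}-\frac{e^{i\gamma_1}-e^{i\gamma_2}}{\sqrt2}\right].
\]
Whether it vanishes does not depend on $\alpha,\beta$ at all, and $\alpha,\beta$ are fixed by the hypothesis anyway. It does vanish for legitimate phases. With representatives $\ket{0}$, $-i\ket{1}$, $e^{i\pi/4}(\ket{0}-\ket{1})/\sqrt2$ and $\ket{2}$, the third output equals $e^{i\pi/4}/\sqrt2$ times the first plus $e^{-i\pi/4}/\sqrt2$ times the second, for every $\alpha,\beta$.

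The same example refutes your fallback criterion. All $\ket{2}$-components equal $\beta\neq0$, and any two $\{\ket{0},\ket{1}\}$-parts span the plane, yet the three outputs are dependent. The correct test is this: apply the coefficients of the unique dependency $\ket{\psi_1}-\ket{\psi_2}-\sqrt2\ket{\psi_3}=0$ to the outputs, and require a nonzero $\ket{2}$-component. That is, $e^{i\gamma_1}-e^{i\gamma_2}-\sqrt2\,e^{i\gamma_3}\neq0$. So linear independence of the outputs is \emph{not} robust to arbitrary phase conventions.

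The repair is to drop that paragraph and finish step 1 instead. For real $\psi$, the orbit $e^{\alpha\hat{J}}(\psi\otimes\ket{0}_F)=\cos\alpha\,\psi\otimes\ket{0}_F+\sin\alpha\,\psi\otimes\ket{1}_F$ meets the $\ket{0}_F$ sector only at $\alpha\in\{0,\pi\}$. Flag alignment therefore fixes the representatives of the $\ket{\psi_j}$ and of $\ket{2}$ up to signs $\epsilon_j,\epsilon_0\in\{\pm1\}$. The test becomes $\epsilon_0(\epsilon_1-\epsilon_2-\sqrt2\,\epsilon_3)\neq0$. This always holds, because $\epsilon_1-\epsilon_2\in\{0,\pm2\}$ while $\sqrt2\,\epsilon_3=\pm\sqrt2$.

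This handles exactly the residual freedom the flag leaves, which is slightly more than the paper's proof checks. The rest of your argument then goes through unchanged.
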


\begin{proof}---Use the three dependent inputs in Eq.~(\ref{eq:dependent_set_SM}) and append the fixed state $\ket{2}$. The phase-access superposer produces the three states in Eq.~(\ref{eq:Psi_j_SM}). Their coefficient matrix is Eq.~(\ref{eq:matrix_ind_SM}) and has determinant $\det M=\alpha^2\beta\neq0$. Hence the outputs are linearly independent. A linearly independent finite family is unambiguously distinguishable with nonzero success probability. After a conclusive outcome $j$, one can reprepare $\ket{\psi_j}\ket{\psi_j}$. The total success probability is the product of the nonzero superposer success probability and the nonzero unambiguous-discrimination probability. Thus a linearly dependent input family is probabilistically cloned, which is forbidden in ordinary quantum mechanics.
\end{proof}

\begin{theorem}[Signaling consequence]
\label{thm:signaling_summary_SM}
If the primitive of Theorem~\ref{thm:cloning_summary_SM} is available, then a steering protocol violates no-signaling.
\end{theorem}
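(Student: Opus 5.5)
The plan is to reduce the statement to the steering argument of Sec.~\ref{subsec:steering_signal}, taking the cloner of Theorem~\ref{thm:cloning_summary_SM} as Bob's device. Fix the linearly dependent family $S=\{\ket{\psi_1},\ket{\psi_2},\ket{\psi_3}\}$ of Eq.~(\ref{eq:dependent_set_SM}). Let $p_j>0$ be the success probability of Protocol~1 on input $\ket{\psi_j}$.

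Pick a full-rank mixed state with a decomposition into $S$, for example $\hat{\rho}_B=\tfrac13\sum_j\ketbra{\psi_j}{\psi_j}$ on $\Span\{\ket0,\ket1\}$. Purify it on Alice's side. By HJW~\cite{HJW1993}, Alice can steer Bob to $\mathcal E_0=\{1/3,\ket{\psi_j}\}$ with one measurement, and to any other decomposition $\mathcal E_1$ of the same $\hat{\rho}_B$ with a different measurement. Bob's reduced state is the same in both cases, and Alice communicates nothing.

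Bob runs Protocol~1 on every copy and records only whether it succeeded. The key step is to show that some $\mathcal E_1$ gives $P_{\mathrm{succ}|1}\neq P_{\mathrm{succ}|0}$. My main route is to argue by contradiction. If the success probability were the same for every decomposition of every density operator, then $\hat{\rho}\mapsto P_{\mathrm{succ}}(\hat{\rho})$ would be a well-defined convex-linear functional on the qubit state space. It would then have the form $\Tr(\hat{F}\hat{\rho})$ for some effect $\hat{F}$, and the conditional success branch would likewise be a linear CP trace-nonincreasing map on states. That map would be a genuine quantum operation that probabilistically clones the dependent set $S$, which contradicts Duan--Guo~\cite{DuanGuo1998}. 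Hence Protocol~1 must violate decomposition-independence somewhere.

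To make this concrete rather than purely existential, I would instead choose $\mathcal E_1=\{1/2,\ket{\pm}\}$, where $\ket{\pm}=(\ket0\pm\ket1)/\sqrt2$; this requires $\hat{\rho}_B=\id/2$, so I would use that state from the start. I would then compute $P_{\mathrm{succ}}$ for the superposer-plus-USD pipeline explicitly. The same superposer maps $\ket{+}$ to $\alpha\ket{+}+\beta\ket2$, which lies outside the span of $\{\ket{\Psi_j}\}$ along directions where the USD elements $\hat{D}_j$ give different weights. This should yield an explicit inequality. Whichever route works, the repetition step is routine: over $n$ shared pairs, Bob's success frequency concentrates, by Chernoff, around $P_{\mathrm{succ}|b}$, so Bob recovers Alice's bit with error $e^{-\Omega(n)}$.

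The main obstacle is the decomposition-dependence step. It needs care about what the primitive does on inputs outside $S$. The superposer is only specified on the domain where the flag supplies an OD convention, so I must assume that the strong phase-access domain contains the states of $\mathcal E_1$ as well. If that is not legitimate, I would fall back on the affinity contradiction argument. If both fail, I would cite the direct flag-phase construction of Protocol~2 (Sec.~\ref{subsec:direct_signal}) as an independent witness that strong access signals.
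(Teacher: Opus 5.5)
Your main route is the paper's own: HJW steering between a decomposition supported on the dependent set and another decomposition, Bob records only the cloner's success flag, and an affinity-by-contradiction argument supplies $P_{\mathrm{succ}|1}\neq P_{\mathrm{succ}|0}$. The key step fails, however, and the paper's one-line justification makes the same leap. Decomposition-independence of the success probability only makes $\hat{\rho}\mapsto P_{\mathrm{succ}}(\hat{\rho})=\Tr(\hat{F}\hat{\rho})$ affine. It says nothing about what is output on success, so your claim that ``the conditional success branch would likewise be a linear CP trace-nonincreasing map'' does not follow, and Duan--Guo is never engaged.

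In fact the success bit alone need not signal. The primitive $\hat{\mathsf S}_{\alpha,\beta}$ is specified only by its action upon success, so its success probability $q(\psi)$ is free. The USD elements for the basis $\{\ket{\Psi_j}\}$ must be $\hat{D}_j\propto\ketbra{\tilde\Psi_j}{\tilde\Psi_j}$, with $\{\tilde\Psi_j\}$ the dual basis. Hence the conclusive USD probability $s(\psi)$ of the normalized output is bounded below by some $s_{\min}>0$. Choosing $q(\psi)=s_{\min}/s(\psi)$ gives a device that still clones $S$ yet has the same total success probability on every input. Its success statistics are therefore identical under every ensemble Alice can steer to.

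Your concrete route does not rescue this. $\{\ket{\Psi_j}\}$ is a basis of $\C^3$ (that is what $\det M\neq0$ says), so $\alpha\ket{+}+\beta\ket{2}$ cannot lie outside its span. Moreover, the success probability on $\ket{+}$ depends on $q$ and on the convention $\Gamma_F$ for that ray, and the hypotheses fix neither. Falling back on Protocol~2 would replace the primitive by the flag effect $\hat{E}_0$, so it would prove a different implication.

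The missing idea is to let Bob keep the USD label $j$ (equivalently, to examine the cloner's output), not just the success bit. Take $\hat{\rho}_B=\id_2/2$ purified by $\ket{\Phi^+}$. Alice's $Z$ measurement steers Bob to $\mathcal E_0=\{\tfrac12,\ket{\psi_1};\tfrac12,\ket{\psi_2}\}$, and her $X$ measurement steers him to $\mathcal E_1=\{\tfrac12,\ket{+};\tfrac12,\ket{-}\}$, where $\ket{-}=\ket{\psi_3}$. The zero-error conditions in Eq.~(\ref{eq:USD_SM}) give $P(j=3\,|\,\mathcal E_0)=0$. By contrast, $P(j=3\,|\,\mathcal E_1)\ge p_3/2>0$ whatever the device does on $\ket{+}$. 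No computation is needed, and your worry about inputs outside $S$ disappears.

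If you prefer the general contradiction, run it on the labels. Decomposition-independence would force $P(j\,|\,\hat{\rho})=\Tr(\hat{F}_j\hat{\rho})$ with $\hat{F}_j\ge0$, since probabilities are nonnegative on all pure inputs. Zero error on the two states $\ket{\psi_i}$, $i\neq j$, which span $\C^2$, then forces $\hat{F}_j=0$, contradicting $p_j>0$. Your Chernoff step finishes the argument; in the concrete version a single occurrence of $j=3$ already certifies Alice's $X$ choice.
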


\begin{proof}---Let Bob's reduced state $\hat{\rho}_B$ have two ensemble decompositions as in Eq.~(\ref{eq:two_ensembles_SM}). By the HJW theorem, Alice can choose a local measurement on a purification of $\hat{\rho}_B$ that remotely prepares either decomposition. In ordinary quantum mechanics Bob sees the same density operator in both cases. In the strong access branch, Bob applies the forbidden discriminator/cloner and records its success flag. For the decomposition supported on the cloned set, the success probability is Eq.~(\ref{eq:Psucc0_SM}), which is strictly positive. For a generic alternative decomposition the average success probability differs, because otherwise the success functional would be affine in $\hat{\rho}_B$ and could not implement the state-dependent dependent-set cloning action. Repeating the test over many shared pairs lets Bob distinguish Alice's choice with arbitrarily high confidence. Hence the primitive is incompatible with no-signaling.
\end{proof}

\begin{theorem}[Logarithmic-query Grover consequence]
\label{thm:grover_summary_SM}
If a strong phase-access RNQM supplies unit-cost reflections $\hat{R}_{\psi}=2\ketbra{\psi}{\psi}-\id$ about the current unknown state, then unstructured search with a unique marked item can be solved with $O(\log N)$ oracle queries.
\end{theorem}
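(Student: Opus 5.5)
The plan is to reduce the statement to the iteration already analyzed in Protocol III, and then count queries. The model charges each oracle call $\hat{O}_f$ at unit cost. The reflection $\hat{R}_{\psi_r}$ about the current state is a free primitive supplied by strong phase access, and it consumes no oracle queries.

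Starting from $\ket{\psi_0}$ in Eq.~(\ref{eq:uniform_SM}), apply $\ket{\psi_{r+1}}=\hat{R}_{\psi_r}\hat{O}_f\ket{\psi_r}$ as in Eq.~(\ref{eq:super_grover_iter_SM}). Each round then uses exactly one query.

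\textbf{Step 1: exact recursion.} Compute $a_{r+1}=\bra{w}\hat{R}_{\psi_r}\hat{O}_f\ket{\psi_r}$ with the overlap form in Eq.~(\ref{eq:reflection_overlap_SM}). This needs the two elementary identities in Eq.~(\ref{eq:oracle_expect_SM}):
\begin{itemize}
\item $\bra{\psi_r}\hat{O}_f\ket{\psi_r}=1-2p_r$, because the oracle flips only the $\ket{w}$ amplitude;
\item $\bra{w}\hat{O}_f\ket{\psi_r}=-a_r$.
\end{itemize}
These give Eq.~(\ref{eq:ar_recursion_SM}), $a_{r+1}=(3-4p_r)a_r$, and hence $p_{r+1}=p_r(3-4p_r)^2$. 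Every state $\ket{\psi_r}$ is normalized because $\hat{R}_{\psi_r}$ and $\hat{O}_f$ are unitary. Therefore $p_r\in[0,1]$ and $\abs{a_r}^2$ is a genuine success probability for measuring in the computational basis.

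\textbf{Step 2: growth phase.} Whenever $p_r\le 1/4$ we have $3-4p_r\ge 2$, so $p_{r+1}\ge 4p_r$. By induction, as long as the threshold has not been crossed, $p_r\ge 4^r/N$. The threshold $p_r\ge1/4$ must therefore be reached at some round $r\le r_*=\lceil\log_4(N/4)\rceil$.

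\textbf{Step 3: stopping at a good iterate.} This is the step I expect to be the main obstacle. The map $g(p)=p(3-4p)^2$ is not monotone, and it sends $p=3/4$ to $0$. So "at least $1/4$ at some round" does not automatically mean "at least $1/4$ at round $r_*$". I would handle this as follows.
\begin{itemize}
\item Stop at the first index $r'$ with $p_{r'}>1/4$. We have $r'\le r_*$, and since the rounds before $r'$ are deterministic and oracle-independent in count, $r'$ can be fixed in advance.
\item Then bound the overshoot. From $p_{r'-1}\le1/4$ and $g(p)\le 9p$, we get $p_{r'}\le 1$. That is trivially true, so $p_{r'}\ge 1/4$ is all that is needed.
\item The catch is that $r'$ depends on $N$ but not on $w$, because the recursion depends only on $p_0=1/N$. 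So we can simply measure after exactly $r'$ rounds.
\end{itemize}

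\textbf{Step 4: amplification.} Verify a measured candidate with one extra query. Repeating a constant number of times raises the success probability to any constant close to one. The total is $O(\log N)$ queries.

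I will also note that this complexity count is a statement within the stipulated primitive. It contradicts the $\Omega(\sqrt N)$ lower bound~\cite{BBBV1997,Zalka1999} only because $\hat{R}_{\psi_r}$ is not a fixed, oracle-independent unitary.
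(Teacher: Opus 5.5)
Your argument is correct and has the same skeleton as the paper's proof: Protocol~3, the recursion $a_{r+1}=(3-4p_r)a_r$, fourfold growth while $p_r\le 1/4$, and constant repetition. The one real difference is your Step~3, and there you are more careful than the paper.

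The paper simply asserts that after $r_*=\lceil\log_4(N/4)\rceil$ rounds the success probability is at least $1/4$. That does not follow from the growth bound, and it is false in general. Write $a_r=\sin\theta_r$. The recursion is then exactly angle tripling, $a_{r+1}=3\sin\theta_r-4\sin^3\theta_r=\sin 3\theta_r$, so $p_r=\sin^2(3^r\theta_0)$ with $\sin\theta_0=1/\sqrt N$. For $N=4^{10}$ one has $r_*=9$ and $3^9\theta_0\approx 6\pi+0.37$, so $p_9\approx 0.13<1/4$. The first crossing is at $r=6$ with $p_6\approx 0.43$, after which the trajectory passes through $p_7\approx 0.71$ and $p_8\approx 0.015$. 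The $\log_4$ count overshoots because the true small-$p$ growth is ninefold, which lands the iterate in the non-monotone region of $g$ that you flagged.

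Your fix is exactly what is needed: halt at the first $r'$ with $p_{r'}\ge 1/4$. This $r'$ depends only on $N$, because $p_{r+1}=g(p_r)$ with $p_0=1/N$ is independent of $w$, and $r'\le r_*$ by your Step~2. The angle picture makes this fully explicit: $r'=\lceil\log_3(\pi/(6\theta_0))\rceil=\Theta(\log N)$ and $\theta_{r'}\in[\pi/6,\pi/2)$, so $p_{r'}\in[1/4,1)$.

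Two small cleanups are worth making. State the stopping rule with $\ge$, or note that $(3-4p)^2>4$ strictly for $p<1/4$, so no tie can occur at $r_*$. Also drop the ``overshoot'' bullet, which proves nothing. Finally, your explicit verification query in Step~4 makes precise what the paper's ``constant number of repetitions'' leaves implicit.
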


\begin{proof}---Run Protocol~3. Let $a_r=\braket{w}{\psi_r}$ and $p_r=|a_r|^2$. The recurrence in Eq.~(\ref{eq:ar_recursion_SM}) gives $a_{r+1}=(3-4p_r)a_r$ and hence $p_{r+1}=p_r(3-4p_r)^2$. While $p_r\le1/4$, the factor satisfies $(3-4p_r)^2\ge4$, so $p_{r+1}\ge4p_r$. Since $p_0=1/N$, after $r_*=\lceil\log_4(N/4)\rceil=\Theta(\log N)$ rounds one has constant success probability. A constant number of repetitions boosts the success probability. The action of $\hat{R}_{\psi_r}$ contains the overlap $\braket{\psi_r}{\phi}$, so it is an OD primitive unavailable in the protected branch.
\end{proof}

%=====================================================================================================================
\section{Summary of implications}\label{sec:summary}
%=====================================================================================================================

The protected quotient theory and the phase-access theory have different operational content:
\begin{center}
\setlength{\tabcolsep}{0.25in}
\begin{adjustbox}{max width=0.95\textwidth}
\begin{tabular}{lll}
\toprule
Question & Protected branch & Access branch \\
\midrule
What is the flag? & \makecell[l]{Gauge coordinate carrying $\hat{J}$} & \makecell[l]{Physical phase degree of freedom} \\
Allowed effects & $\Sym(V)\cap\{\hat{J}\}'$ & \makecell[l]{Some $\hat{J}$-noncommuting effects} \\
Global phase & Unobservable & Observable \\
Composition & \makecell[l]{Balanced product $\bal\simeq\otimes_{\C}$} & \makecell[l]{Ill-defined quotient or larger real theory} \\
Relation to CQM & \makecell[l]{Empirically equivalent realification} & Not CQM \\
If generic/coherent & \makecell[l]{No post-quantum primitive} & \makecell[l]{OD resource: cloning, signaling, $O(\log N)$ Grover} \\
\bottomrule
\end{tabular}
\end{adjustbox}
\end{center}

Thus, the operational content of the phase flag is decisive. If the flag is protected, the quotient construction is complex quantum mechanics in real notation. If the flag is accessible, it is not a harmless real rewriting; it is a phase-reference extension.

\end{document}